\tikzset{point/.style={circle, fill, inner sep=1.7}}
\renewcommand{\H}{\mathscr{H}}
\DeclareMathOperator*{\argmin}{\arg\min}
\DeclareMathOperator*{\atan}{atan2}
\newcommand{\x}{\mathbf{x}}
\newcommand{\rcap}{r_{\rm cap}}
\newcommand{\T}{^{\mbox{\tiny \sf T}}}
\newcommand{\trm}{t_{\rm r}}
\newcommand{\thetaEntry}{\theta_{\tan}}
\newcommand{\thetaExit}{\theta_{\rm exit}}
\newcommand{\psiExit}{\psi_{\rm exit}}
\newcommand{\phiExit}{\phi_{\rm exit}}
\newcommand{\tarc}{t_{\rm arc}}
\newcommand{\ts}{t_{\rm s}}
\newtheorem{theorem}{Theorem}
\newtheorem{lemma}{Lemma}
\newtheorem{remark}{Remark}
\newtheorem{proposition}{Proposition}
\renewenvironment{proof}{\textit{Proof:}}{\hfill $\blacksquare$}
\title{\LARGE\bf  Optimal Evasion from a Sensing-Limited Pursuer}
\author{Dipankar Maity, Alexander Von Moll, Daigo Shishika, Michael Dorothy
\thanks{
We gratefully acknowledge the support of ARL grant ARL DCIST CRA W911NF-17-2-0181. The views expressed in this paper are those of the authors and do not reflect the official policy or position of the United States Government, Department of Defense, or its components.
}
\thanks{D. Maity is with the Department of Electrical and Computer Engineering, University of North Carolina at Charlotte,  NC, 28223, USA.
Email: 		{{dmaity@uncc.edu}}
}
\thanks{
A.~Von Moll is with the Control Science Center,  Air Force Research Laboratory, WPAFB, OH, 45433, USA. 
Email: { alexander.von\_moll@us.af.mil}
}
\thanks{
D. Shishika is with the Department of Mechanical Engineering, George Mason University, Fairfax, VA, 22030, USA. 
Email: 		{ dshishik@gmu.edu}
}
\thanks{
M. Dorothy is with the Army Research Directorate, DEVCOM Army Research Laboratory, APG, MD, 20783, USA. 
Email:  {michael.r.dorothy.civ@mail.mil}
 }
}
\newcommand{\linebreakand}{%
  \end{@IEEEauthorhalign}
  \hfill\mbox{}\par\vspace{10pt}\hspace{20pt}
  \mbox{}\hfill\begin{@IEEEauthorhalign}
}
\begin{document}

\maketitle
\thispagestyle{empty}
\pagestyle{empty}

\begin{abstract}
This paper investigates a partial-information pursuit evasion game in which the Pursuer has a limited-range sensor to detect the Evader.
Given a fixed final time, we derive the optimal evasion strategy for the Evader to maximize its distance from the pursuer at the end. 
Our analysis reveals that in certain parametric regimes, the optimal Evasion strategy involves a `risky' maneuver, where the Evader's trajectory comes extremely close to the pursuer's sensing boundary before moving behind the Pursuer.
Additionally, we explore a special case in which the Pursuer can choose the final time. In this scenario, we determine a (Nash) equilibrium pair for both the final time and the evasion strategy.
\end{abstract}

\section{Introduction}

Many variants of pursuit evasion games (PEGs) have been studied in the literature with particular emphasis on either the geometry of the game environment \cite{dorothy2021one} or the situations that involve multiple pursuers and/or evaders \cite{garcia2020optimal,pourghorban2022target}.
However, these works assume perfect state information and complete knowledge of the game parameters (i.e., payoff and the capability of the opposing player).
%
%
Extensions of pursuit-evasion games in the context of limited perception have mainly considered limited sensing range e.g., \cite{durham2010distributed}, limited field of view e.g.,  \cite{gerkey2006visibility}, and noisy measurements e.g., \cite{bagchi1981linear}, but the challenges associated with the lack of continuous sensing remains unsolved.

This paper is motivated by the scenario involving a Pursuer ($P$) with limited perception capability.
$P$ is assumed to have two ways of sensing the Evader ($E$):
\begin{itemize}
\item Passive sensing: $E$ is sensed if it is within some sensing range from $P$.
\item Active sensing: $E$ is sensed anywhere if $P$ activates a costly measurement device.
\end{itemize}
Solving for the optimal/equilibrium strategies for this problem involves identification of the timing at which $P$ should activate the sensing (based on the type and amount of cost it must pay) as well as its heading when $E$ is invisible.


Some of the closely related prior work includes \cite{aleem2015self, maity2016strategies, maity2016optimal, huang2021defending, maity2023efficient}, where the problem of active sensing is studied. 
Refs. \cite{maity2016optimal, maity2016strategies, huang2021defending, maity2023efficient} study a linear-quadratic game formulation and do not consider the passive sensing aspect of $P$.
However, these works have studied the optimal evader strategy between two sensing instances. 
The passive sensing capability of $P$ makes the problem more challenging, and the derived strategies in \cite{maity2016optimal, maity2016strategies, huang2021defending, maity2023efficient} are not necessarily optimal and sometimes infeasible in presence of passive sensing.
Ref. \cite{aleem2015self} does not investigate the optimal evasion strategy; instead, the focus was on designing active sensing instances to ensure the eventual capture of $E$.

The work presented in this paper solves a subproblem of the scenario introduced above.
Specifically, we extract the time period between two consecutive moments of active sensing by $P$ and assume that $P$ will move in a straight line trajectory during this period. 
We then proceed to solve two problem variations for this interval of time:
\begin{enumerate}
    \item \textbf{Optimal Control Problem:} Given a fixed interval of time, what is the best-response strategy of $E$  to maximize its final distance from $P$?
    \item \textbf{Nash Game:} If $P$ can select the duration of the time interval (i.e., the intermittent sensing times), what is the equilibrium strategy pair such that neither player can benefit from unilateral deviation?
\end{enumerate}

\section{Problem Formulation} \label{sec:formulation}
$P$ has a capture radius of $\rcap$ and it is moving along the positive $x$-axis with a constant speed of $v_P$. 
Throughout this paper, we will refer to the circle around $P$ with radius $\rcap$ as the \textit{Proximity Circle}. 
$E$ starts outside the Proximity Circle and has an objective to avoid getting captured by $P$ and maximize its final distance from $P$ at final time $t=T$. 
In case capture is unavoidable, the objective of $E$ is to delay capture as long as possible, i.e., maximize its \textit{survival time}. 

The dynamics of $P$ and $E$ can be expressed as
\begin{align}
    \dot\x_P(t) = v_P\begin{bmatrix}
        1 \\ 0
    \end{bmatrix}, \qquad 
    \dot\x_E(t) = v_E \begin{bmatrix}
         \cos\psi(t) \\  \sin \psi(t)
    \end{bmatrix},
\end{align}
where $v_E < v_P$ denotes the constant speed of $E$ and $\psi(t)$ denotes the instantaneous heading angle of $E$.
Without loss of generality, one may assume $v_P = 1$ and $v_E = \mu < 1$ and $P$'s capture radius $\rcap=1$.\footnote{
Define the new variables $\bar\x_i(t) = \frac{1}{\rcap}\x_i \big(\rcap\frac{t}{v_P}\big)$ for $i = P, E$ and $\mu = v_E/v_P$. Then, one obtains $\dot{\bar\x}_P = \begin{bmatrix}
    -1 & 0
\end{bmatrix}\T$, and $\dot{\bar\x}_E = \mu \begin{bmatrix}
         \cos\psi(t) &  \sin \psi(t)
    \end{bmatrix}\T$.
    In this transformed coordinate system, the Pursuer's capture radius becomes $1$.
}

To proceed with our analysis, let us describe $E$'s dynamics in the Pursuer-fixed frame:
\begin{equation}
    \label{eq:f_nondim}
   \dot{\x}(t) =  
    \begin{bmatrix}
        \mu \cos{\psi(t)} - 1 \\
        \mu \sin{\psi(t)}
    \end{bmatrix}, \quad
    \x(0)=\x_0,
\end{equation}
where $\x = \x_E - \x_P = \begin{bmatrix} x & y \end{bmatrix}\T$.
The capture avoidance constraint is described as
\begin{equation}
    \label{eq:c_nondim}
    S(\x) = \|\x\| - 1 \ge 0,
\end{equation}
for all $t \in \left[ 0, T \right]$.
Due to symmetry across the $x$-axis, we assume that $y\ge0$ without loss of generality.

The objective cost functional, which $E$ wishes to maximize, is
\begin{equation}
    \label{eq:J}
    J = Φ(\x_{f}) = \|\x_{f}\|^2,
\end{equation}
where the subscript $f$ denote conditions at final time (i.e., when $t = t_f = T$).
The terminal manifold is given by the zero-level set of the function
\begin{equation}
    \label{eq:ϕ}
    F(\x_{f}, t_f) = t_f - T.
\end{equation}
In case capture is unavoidable (i.e., for every possible evading strategy, there exists a $t\le T$ such that $S(\x(t)) < 0$), $E$ wishes to maximize $t_{\rm{cap}}$, where $t_{\rm{cap}}$ is the earliest time such that $S(\x(t_{\rm{cap}})) = 0$.


\section{Some Geometric Relationships}\label{sec:geometry}


\begin{figure}
    \centering
    \includegraphics[width=0.8\linewidth]{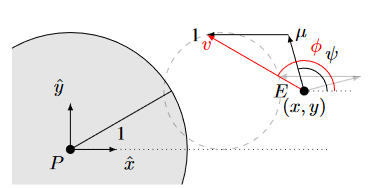}
    \caption{Constrained max distance schematic. The quantities $ϕ$ and $v$ represent the Evader's effective heading and velocity in the Pursuer-fixed frame. 
    The grey vectors demonstrate that there is another heading, $ψ$, which results in the same $ϕ$ but with a reduced effective velocity. 
    The dashed gray circle is the locus of the end points of the resultant velocity vector $v$. 
    }
    \label{fig:constrained_max_distance_schematic}
\end{figure}
We begin by noting some relationships between E's choice of velocity in the inertial frame and the correponsing velocity in the Pursuer-fixed frame.
Let $v$ and $\phi$ denote the speed and instantaneous heading angle of $E$ in the Pursuer-fixed frame (c.f., Fig.~\ref{fig:constrained_max_distance_schematic}). 
The relationship between $\phi, \psi$ and $v$ is
\begin{align}
\label{eq:v_mu_phi_psi}
\begin{split}
        &v \sin\phi = \mu \sin\psi, \\
        &v \cos\phi = -1 + \mu \cos\psi.    
\end{split}
\end{align}
Given that speed $v$ is non-negative, we immediately obtain $\cos\phi <0$, or $|\phi| > \pi/2$. 
One may further verify that $\phi$ belongs in the range $[\pi-\sin^{-1}\mu,~ \pi + \sin^{-1} \mu]$ for every possible choice of $\psi$.
One may also verify that $v$ is in the range $[1-\mu,~ 1+\mu]$.  

Equation \eqref{eq:v_mu_phi_psi} can be used to solve for two quantities (e.g., $v$ and $\phi$) given the third one. 
For a given $\psi$, we obtain 
\begin{align}
\label{eq:given_psi}
\begin{split}
    &\phi(\psi) = \atan \left({\mu \sin\psi}, ~{\mu\cos\psi - 1}\right)\\
    &v(\psi) = \sqrt{1 + \mu^2 - 2\mu\cos ψ}.
    \end{split}
\end{align}
For a given $\phi \in [\pi-\sin^{-1}\mu,~ \pi + \sin^{-1} \mu]$, we obtain
\begin{align}
\label{eq:given_phi}
    &\big(\psi(\phi), ~v(\phi)\big) \nonumber \\
    &= \begin{cases}
        \left( ϕ - \sin^{-1} \left( \frac{\sin ϕ}{\mu}\right),~~ -\cos\phi + \sqrt{\mu^2 - \sin^2 \phi}\right). \\
        \left( ϕ + \sin^{-1} \left( \frac{\sin ϕ}{\mu}\right) - \pi, ~ -\cos\phi - \sqrt{\mu^2 - \sin^2 \phi} \right).
    \end{cases}
\end{align}
Notice from \eqref{eq:given_phi} that  
there exist two solutions for a given $\phi$ (c.f., \Cref{fig:constrained_max_distance_schematic}) .

\section{Regions of Interest and Max Survival Time}

\begin{lemma}\label{lm:initConInABand}
    For all states $\|\x\|>1$ in the Pursuer-fixed frame such that $x\leq0$ or $y\geq1$, the  optimal trajectory starting at $\x$ avoids the constraint.
\end{lemma}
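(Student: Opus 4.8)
The plan is to exhibit, for each such initial state $\x$, a specific admissible evasion strategy whose trajectory never enters the Proximity Circle; since the optimal trajectory does at least as well, it too must avoid the constraint. The natural candidate is the ``radially outward'' heading: at each instant choose $\psi$ so that the Pursuer-relative velocity $\dot\x$ points as nearly as possible along the direction $\x/\|\x\|$, i.e. so that $\frac{d}{dt}\|\x\|^2 = 2\x\T\dot\x$ is maximized. Recalling from \eqref{eq:v_mu_phi_psi}--\eqref{eq:given_psi} that in the Pursuer-fixed frame $E$'s velocity has the form $v(\cos\phi,\sin\phi)$ with $\phi\in[\pi-\sin^{-1}\mu,\pi+\sin^{-1}\mu]$ and $v\in[1-\mu,1+\mu]$, the key computation is to show that whenever $x\le 0$ or $y\ge 1$ (and $\|\x\|>1$), there is an admissible $(\phi,v)$ with $\x\T(v\cos\phi, v\sin\phi)\ge 0$, so that $\|\x\|$ is non-decreasing and the trajectory stays outside.

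The main steps are: (i) parametrize $\x = r(\cos\theta,\sin\theta)$ with $r>1$, and write $\x\T\dot\x = rv\cos(\phi-\theta)$; the goal is $\cos(\phi-\theta)\ge 0$, i.e. $\phi$ within $\pi/2$ of $\theta$, for \emph{some} admissible $\phi$. (ii) Since admissible $\phi$ sweep the arc centered at $\pi$ of half-width $\sin^{-1}\mu<\pi/2$, this arc intersects the half-plane $\{\phi:\cos(\phi-\theta)\ge0\}$ exactly when $\theta$ is within $\pi/2+\sin^{-1}\mu$ of $\pi$ — equivalently when $\theta\in(\pi/2-\sin^{-1}\mu,\ 3\pi/2+\sin^{-1}\mu)$, which certainly holds for all $\theta\in[\pi/2,3\pi/2]$, i.e. for all states with $x\le 0$. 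This disposes of the $x\le 0$ case immediately: pick $\phi=\pi$ (heading $\psi=\pi$, $E$ runs antiparallel to $P$), giving $\x\T\dot\x = -(1+\mu)x\ge 0$. (iii) For the remaining region $x>0,\ y\ge 1$, the outward heading no longer guarantees $\|\x\|$ non-decreasing, so instead I would track the $y$-coordinate: with $\psi = \pi/2$ we get $\dot y = \mu>0$ and $\dot x = -1<0$, so $x$ strictly decreases while $y$ never decreases; hence the trajectory moves (weakly) away from the circle in $y$ and toward the $x\le0$ region in $x$. Since $y\ge 1$ throughout this phase and $\|\x\|\ge |y|\ge 1$ with equality only at $(0,1)$ — which is not reached because $\dot y>0$ — the constraint $S(\x)\ge 0$ is maintained; and once $x$ reaches $0$ we are back in the case already handled. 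Concatenating these controls yields a globally admissible strategy, proving the claim.

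The main obstacle is the $x>0,\ y\ge1$ region: there the purely radial argument fails (an outward-pointing admissible velocity need not exist when $\theta$ is small), so one must argue via a different Lyapunov-type quantity. Using $y$ as I sketched works because $\dot y$ can be made strictly positive uniformly, but care is needed at the boundary case $y=1$, $x>0$ to ensure the trajectory does not graze $\|\x\|=1$ before $\dot y$ pushes it clear; a short continuity/strict-inequality argument ($\dot y = \mu > 0$ so $y>1$ for all $t>0$) closes this gap. A cosmetic subtlety is handling the corner where both conditions touch ($y=1$, $x=0$) and making sure the switch between the two controls is well-defined — but since we only need existence of \emph{an} admissible avoiding trajectory, any measurable switching rule suffices and regularity is not an issue.
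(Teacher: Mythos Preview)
Your constructions are fine as far as they go: the headings $\psi=\pi$ (for $x\le0$) and $\psi=\pi/2$ (for $y\ge1$) do produce admissible trajectories that stay outside the Proximity Circle. The gap is in the inferential step ``since the optimal trajectory does at least as well, it too must avoid the constraint.'' Optimality here refers only to the terminal distance $\|\x_f\|$; it says nothing about the path taken to get there. An optimal (constrained) trajectory could perfectly well touch or ride the Proximity Circle for some time and still terminate farther from $P$ than your exhibited trajectory does. The existence of \emph{some} constraint-avoiding feasible trajectory therefore does not imply that the optimizer itself avoids the constraint.

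What the lemma actually requires---and what the paper's terse ``by inspection'' is gesturing at---is that the \emph{unconstrained} optimal trajectory from such states already stays outside the circle; then, being feasible for the constrained problem and achieving the unconstrained maximum, it coincides with the constrained optimum and the constraint is never active. Using the later analysis~\eqref{eq:ψ_xy_compact}, the unconstrained optimal heading satisfies $\sin\psi^*\ge 0$ under the standing assumption $y\ge 0$, so along that straight-line path $\dot y=\mu\sin\psi^*\ge 0$ and $\dot x=\mu\cos\psi^*-1<0$. If $y_0\ge 1$ this gives $y(t)\ge 1$ and hence $\|\x(t)\|\ge 1$ throughout; if $x_0\le 0$ this gives $x(t)\le 0$ and $\tfrac{d}{dt}\|\x\|^2=2x\dot x+2y\dot y\ge 0$, hence $\|\x(t)\|\ge\|\x_0\|>1$. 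To repair your argument, replace the ad hoc headings $\psi=\pi$, $\psi=\pi/2$ by the unconstrained optimal heading and run exactly this monotonicity check.
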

\begin{proof}
    This is true by inspection, as $E$ is maximizing its final distance from the origin of the Pursuer-fixed frame.
\end{proof}

The initial conditions of interest for the remainder of this paper are those where Lemma~\ref{lm:initConInABand} does \textit{not} hold.

When $E$ is on the Proximity Circle, its position is uniquely characterized by the angle $\theta = \atan(y, x)$. 
In subsequent sections, when we say that $E$ is at angle $\theta$ on the Proximity Circle, we imply $\x = [\cos\theta ~~ \sin\theta]\T$. 

\begin{lemma}\label{lm:noEscape}
    At any time $t$, if $E$ is on the Proximity Circle with $|\theta| < \cos^{-1}\mu$, then $\|\x(t^+)\| < 1$ and consequently, capture happens.
\end{lemma}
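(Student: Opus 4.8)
The plan is to show that, regardless of the heading $\psi$ that $E$ chooses while sitting on the Proximity Circle at angle $\theta$ with $|\theta|<\cos^{-1}\mu$, the radial component of $E$'s velocity in the Pursuer-fixed frame is strictly negative, so $\|\x\|$ decreases instantaneously and the capture constraint \eqref{eq:c_nondim} is violated. The key observation is that the outward radial direction at $\x=[\cos\theta~~\sin\theta]\T$ is just $\x$ itself (it is a unit vector), so the rate of change of $\|\x\|$ at such a point is $\tfrac{d}{dt}\|\x\| = \x\T\dot\x/\|\x\| = \x\T\dot\x$.

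First I would compute $\x\T\dot\x$ using the dynamics \eqref{eq:f_nondim}: with $\x=[\cos\theta~~\sin\theta]\T$ and $\dot\x = [\mu\cos\psi - 1,~ \mu\sin\psi]\T$, this gives
\[
\x\T\dot\x = \cos\theta(\mu\cos\psi - 1) + \sin\theta\,\mu\sin\psi = \mu\cos(\psi-\theta) - \cos\theta.
\]
Then I would maximize over the control $\psi$: the right-hand side is at most $\mu - \cos\theta$, attained when $\psi=\theta$. The hypothesis $|\theta| < \cos^{-1}\mu$ is exactly equivalent to $\cos\theta > \mu$ (since $\cos$ is even and decreasing on $[0,\pi]$), hence $\mu - \cos\theta < 0$, so $\x\T\dot\x < 0$ for \emph{every} admissible $\psi$. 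Therefore $\tfrac{d}{dt}\|\x\|\big|_{t}<0$, and since $\|\x(t)\|=1$, there is $\varepsilon>0$ with $\|\x(t')\|<1$ for all $t'\in(t,t+\varepsilon)$; in particular $\|\x(t^+)\|<1$, and by \eqref{eq:c_nondim} capture occurs.

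I do not anticipate a serious obstacle here — the argument is a one-line Hamiltonian-style bound on the radial velocity. The only point requiring a little care is the logical direction of the statement: the claim is that capture is \emph{forced} no matter what $E$ does, so the bound must be taken as a maximum over $\psi$ (not merely ``there exists a bad $\psi$''), and one should note that the conclusion is independent of $P$'s straight-line motion assumption since in the Pursuer-fixed frame that motion has already been absorbed into \eqref{eq:f_nondim}. A minor subtlety worth a sentence is that $\|\x(t)\|$ is $C^1$ in a neighborhood of any point with $\|\x\|=1>0$, so the sign of the derivative genuinely controls the short-time behavior; continuity of $\psi(\cdot)$ (or merely measurability with the velocity field bounded) suffices to conclude $\|\x(t^+)\|<1$.
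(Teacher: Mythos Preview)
Your proposal is correct and follows essentially the same approach as the paper: compute the radial rate $\dot d = \x\T\dot\x/\|\x\|$ on the Proximity Circle and show it is strictly negative for every admissible control. The only cosmetic difference is that the paper parameterizes the Pursuer-fixed velocity via $(v,\phi)$ and uses the range $\phi\in[\pi-\sin^{-1}\mu,\,\pi+\sin^{-1}\mu]$ to bound $\cos(\phi-\theta)$, whereas you work directly with $\psi$ and bound $\mu\cos(\psi-\theta)-\cos\theta\le \mu-\cos\theta<0$; these are equivalent computations.
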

\begin{proof}
%
    Denote $d(t) = \|\x(t)\|$. Then,
    \begin{align*}
        \dot d = \frac{\x\T \dot \x}{\|\x\|}.
    \end{align*}
    Using \Cref{eq:f_nondim} and \Cref{eq:v_mu_phi_psi} to substitute $\dot \x = [v\cos\phi ~~ v\sin\phi]\T$ and given that $\x(t)=[\cos\theta~~\sin\theta]\T$, we obtain
    \begin{align*}
        \dot d = v \cos(\phi - \theta).
    \end{align*}
    Since $\phi \in [\pi-\sin^{-1}\mu,~ \pi + \sin^{-1} \mu]$ and $\theta \in (-\cos^{-1}\mu, \cos^{-1} \mu)$, we have $|\phi-\theta| > \pi - \sin^{-1}\mu - \cos^{-1}\mu = \frac{\pi}{2}$.
    Consequently, $\dot d < 0$ and $d(t^+) <  1$.
\end{proof}

\Cref{lm:noEscape} implies that $E$ must avoid ending up on the Proximity Circle with $|\theta|< \cos^{-1}\mu$.
It is noteworthy that, for some initial conditions $\x_0$ and game duration $T$, capture is inevitable. 
The next lemma provides the necessary and sufficient conditions on $\x_0$ and $T$ for $E$ to end up on the Proximity Circle at a $\theta \in (-\cos^{-1}\mu, \cos^{-1} \mu)$. 
 \begin{lemma} \label{lem:guaranteedCapture}
 Capture is guaranteed iff $E$'s initial location $\x_0 \equiv \begin{bmatrix}x_0 & y_0\end{bmatrix}\T$ and the game duration $T$ satisfy 
    \begin{align} \label{eq:no_escape_zone}
        \mu x_0 + \sqrt{1- \mu^2}y_0 < 1,\quad x_0 > \mu,
    \end{align}
    \begin{align}\label{eq:max_survival_time}
    \begin{split}
    T > T_{\rm{survive}}(\x_0),
    \end{split}
\end{align}
where 
\begin{align*}
    T_{\rm{survive}}(\x_0) = \tfrac{1}{1-\mu^2}\left[ (x_0-\mu) - \sqrt{(1-\mu x_0)^2 - (1-\mu^2)y^2_0} \right].
\end{align*}
\end{lemma}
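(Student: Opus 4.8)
The plan is to identify the no-escape zone with the set $\mathcal{N}:=\{\x_0:\ \|\x_0\|>1,\ g(\x_0)<1,\ x_0>\mu\}$, where $g(\x):=\mu x+\sqrt{1-\mu^2}\,y$; under the standing assumption $\|\x_0\|>1$, condition~\eqref{eq:no_escape_zone} reads exactly ``$\x_0\in\mathcal N$'', and I will show $E$'s maximal survival time equals $T_{\rm{survive}}(\x_0)$ when $\x_0\in\mathcal N$ and is $+\infty$ otherwise. Two identities do the work. First, for the \emph{constant} heading $\psi\equiv\cos^{-1}\mu$, \eqref{eq:f_nondim} gives $\dot\x=\sqrt{1-\mu^2}\,\mathbf t$ with $\mathbf t:=(-\sqrt{1-\mu^2},\mu)$; decomposing $\x=g(\x)\,\mathbf n+s\,\mathbf t$ in the orthonormal basis $\mathbf n:=(\mu,\sqrt{1-\mu^2}),\ \mathbf t$, one finds $g(\x(t))\equiv g_0:=g(\x_0)$ and $s(t)=s_0+t\sqrt{1-\mu^2}$ with $s_0:=\mu y_0-\sqrt{1-\mu^2}\,x_0$, hence $\|\x(t)\|^2=g_0^2+(s_0+t\sqrt{1-\mu^2})^2$. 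Second, for \emph{any} control, integrating \eqref{eq:f_nondim} gives $\x(t)=\x_0-(t,0)+\mu\int_0^t(\cos\psi,\sin\psi)\,d\tau$, so $\|\x(t)\|\le\|\x_0-(t,0)\|+\mu t=:1+k(t)$, and $k$ is convex since $k''(t)=y_0^2\,\|\x_0-(t,0)\|^{-3}\ge0$.

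If $\x_0\notin\mathcal N$, I let $E$ play $\psi\equiv\cos^{-1}\mu$. By the first identity $\|\x(t)\|^2=g_0^2+(s_0+t\sqrt{1-\mu^2})^2$: if $g_0\ge1$ this is $\ge1$ for all $t$, and if $x_0\le\mu$ then $\|\x_0\|>1$ and $y_0\ge0$ force $s_0>0$, so $\|\x(t)\|^2\ge g_0^2+s_0^2=\|\x_0\|^2>1$. In either case $S(\x(t))\ge0$ for all $t$, so $E$ is never captured, for any $T$; this gives the necessity of~\eqref{eq:no_escape_zone}.

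Now suppose $\x_0\in\mathcal N$. Expanding the equation $\|\x_0-(t,0)\|=1-\mu t$ shows the zeros of $k$ are the roots of $(1-\mu^2)t^2-2(x_0-\mu)t+(\|\x_0\|^2-1)=0$, whose discriminant equals $(1-\mu x_0)^2-(1-\mu^2)y_0^2$; on $\mathcal N$ we have $1-\mu x_0>0$ and this discriminant is positive \emph{precisely because} $g_0<1$, while $x_0>\mu$ and $\|\x_0\|>1$ make both roots positive, so $k$ has two zeros $0<T_{\rm{survive}}<t_2$ and, being convex with $k(0)=\|\x_0\|-1>0$, has $k(t)<0$ exactly on $(T_{\rm{survive}},t_2)$. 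For every $t$ in this interval the second identity gives $\|\x(t)\|\le1+k(t)<1$, i.e.\ $S(\x(t))<0$, \emph{whatever} $E$ does; so if $T>T_{\rm{survive}}$ then, since $t_2>T_{\rm{survive}}$, the interval $(T_{\rm{survive}},\min\{t_2,T\})$ is nonempty, and any $t$ in it has $S(\x(t))<0$ with $t<T$, so capture is guaranteed. Conversely, if $T\le T_{\rm{survive}}$, let $E$ play the constant heading $\psi_0:=\atan(y_0,\,x_0-T_{\rm{survive}})$, which is well-defined because $x_0-T_{\rm{survive}}>0$ on $\mathcal N$. From $k(T_{\rm{survive}})=0$ we get $\|\x_0-(T_{\rm{survive}},0)\|=1-\mu T_{\rm{survive}}>0$, and a short computation then gives $\x(T_{\rm{survive}})=(\cos\psi_0,\sin\psi_0)$, so $\|\x(T_{\rm{survive}})\|=1$; moreover $\cos\psi_0=(x_0-T_{\rm{survive}})/(1-\mu T_{\rm{survive}})>\mu$ (again equivalent to $g_0<1$), so $\tfrac{d}{dt}\|\x(t)\|^2=2(\mu-\cos\psi_0)<0$ at $t=T_{\rm{survive}}$. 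As $\|\x(t)\|^2$ is a convex quadratic in $t$, its derivative is nondecreasing, hence $\|\x(t)\|^2$ is strictly decreasing on $[0,T_{\rm{survive}}]$ and so $\|\x(t)\|>1$ on $[0,T_{\rm{survive}})$; thus $E$ survives on $[0,T_{\rm{survive}}]\supseteq[0,T]$ and capture is not guaranteed. Combining these shows capture is guaranteed iff both \eqref{eq:no_escape_zone} and \eqref{eq:max_survival_time} hold.

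The step I expect to be the main obstacle is the sharp upper bound above. The crude reachable-set estimate $\|\x(t)\|\le1+k(t)$ only forces capture on the bounded window $(T_{\rm{survive}},t_2)$ — past $t_2$ it says nothing, since $k$ turns positive again — so at first sight it looks too weak to control the \emph{first} capture time; the point is that it nonetheless suffices, because a single $t\in(T_{\rm{survive}},t_2)$ with $S(\x(t))<0$ already certifies ``capture within $[0,T]$'' for any $T>T_{\rm{survive}}$. Making this work rests on the algebraic coincidence that the zeros of the elementary bound $k$ are exactly the roots of the quadratic defining $T_{\rm{survive}}$, with $T_{\rm{survive}}$ the smaller of the two, together with checking that the recurring sign conditions needed along the way — $(1-\mu x_0)^2>(1-\mu^2)y_0^2$, $\cos\psi_0>\mu$, and $s_0>0$ when $x_0\le\mu$ — all reduce to $g_0<1$ and $\|\x_0\|>1$; this bookkeeping is delicate but routine.
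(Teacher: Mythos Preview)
Your proposal is correct and follows essentially the same approach as the paper's proof: both use the triangle-inequality/reachable-set bound $\|\x(t)\|\le\sqrt{(x_0-t)^2+y_0^2}+\mu t$ to force capture beyond $T_{\rm survive}$, the constant heading $\psi=\cos^{-1}\mu$ to show survival when \eqref{eq:no_escape_zone} fails, and the specific heading $\psi_0=\atan(y_0,\,x_0-T_{\rm survive})$ to certify survival up to $T_{\rm survive}$. Your orthonormal decomposition and the convexity arguments for $k$ and for $\|\x(t)\|^2$ are tidy repackagings of the paper's direct computations, not a different route.
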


\begin{proof}
    The proof is presented in \Cref{AP:guaranteedCapture}.
\end{proof}

%
For a given $T$, the set of initial locations for which $E$ is \textit{not} able to avoid capture is given by the set:
\begin{align*}
    \begin{multlined}
    \Omega(T) = \left\{\x_0~|~ (x_0 - T)^2 + y_0^2 < (1-\mu T)^2, \right. \\
    \left. \text{and  \eqref{eq:no_escape_zone} is satisfied}\right\}.
    \end{multlined}
\end{align*}
The set $\Omega_{\text{capture}} = \cup_{T\ge 0} \Omega_T$ is called the ``no-escape zone" in this paper. 
For any $\x_0 \in \Omega_{\text{capture}}$ there exists a final time $T$, such that capture is inevitable. 
For a given $\x_0 \in \Omega_{\text{capture}}$, the maximum \textit{survival time} is $T_{\rm{survive}}(\x_0)$.
An illustration is provided in \Cref{fig:survival_time}.
\begin{figure}
    \centering
    \includegraphics[trim = 135 320 160 340, clip, width = 0.9 \linewidth]{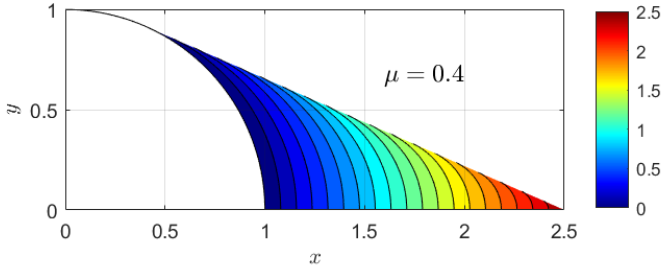}
    \caption{Level curves of $T_{\rm{survive}}$ for $\x_0 \in \Omega_{\text{capture}}$.}
    \label{fig:survival_time}
\end{figure}

\begin{remark}\label{rem:nashUnconstrainted}
    If $\x_0 \notin \Omega_{\text{capture}}$, then there always exists a strategy to avoid capture. 
For instance, if $E$ moves along $\psi = \cos^{-1}\mu$, then $\|\x(t)\| \ge 1$ for all $t$. 
To verify this, we observe from \eqref{eq:f_nondim} that $\x(t) = \x_0 + t \begin{bmatrix} (\mu^2 -1) & \mu\sqrt{1-\mu^2} \end{bmatrix}\T$, and consequently, $\|\x(t)\|^2 \ge (\mu x_0 + \sqrt{1- \mu^2}y_0)^2 \ge 1$, for all $t$.
Such a heading angle is not necessarily optimal.
\end{remark}

In subsequent sections, we analyze the optimal strategy for $E$ with the initial conditions $\x_0 \notin \Omega_{\text{capture}}$.


\section{Unconstrained Optimal Control Analysis} \label{sec:optimal_control}

\begin{lemma} \label{lem:optimal_control}
    The optimal Evader heading $ψ^*$ is constant backwards in time until either 1) the constraint is reached ($S = 0$) or 2) the initial condition is reached ($\mathbf{x}_0 = \begin{bmatrix}x_0 & y_0\end{bmatrix}\T$).
\end{lemma}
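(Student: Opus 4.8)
The plan is to apply the Pontryagin Minimum Principle (PMP) to the unconstrained problem \eqref{eq:f_nondim}--\eqref{eq:J}, treating the constraint \eqref{eq:c_nondim} as inactive on the arc under consideration (so the statement concerns the interior arc between the terminal time and the first contact with $S=0$ or the initial point). First I would form the Hamiltonian $H = \bm{\lambda}\T \dot{\x} = \lambda_x(\mu\cos\psi - 1) + \lambda_y \mu \sin\psi$, with costate dynamics $\dot{\bm{\lambda}} = -\partial H/\partial \x = \mathbf{0}$ since the right-hand side of \eqref{eq:f_nondim} does not depend on $\x$. Hence $\lambda_x$ and $\lambda_y$ are each constant along the optimal arc. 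The optimality condition $\partial H/\partial \psi = 0$ gives $-\lambda_x \mu \sin\psi^* + \lambda_y \mu \cos\psi^* = 0$, i.e. $\tan\psi^* = \lambda_y/\lambda_x$, so $\psi^*$ is determined (up to the correct branch, fixed by the minimization/maximization sign) by the constant ratio $\lambda_y/\lambda_x$ and is therefore itself constant on the arc.

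Next I would pin down the costate at the terminal time via the transversality conditions associated with the payoff $\Phi(\x_f) = \|\x_f\|^2$ and the terminal manifold $F(\x_f,t_f) = t_f - T = 0$ from \eqref{eq:ϕ}. Since the terminal manifold fixes only $t_f$ and leaves $\x_f$ free, transversality yields $\bm{\lambda}(T) = \nu\, \nabla_{\x}\Phi(\x_f) = 2\nu\, \x_f$ for a scalar multiplier (absorbing signs appropriately for a maximization problem), so $\bm{\lambda}(T)$ points radially along $\x_f$. Combined with $\dot{\bm{\lambda}} = \mathbf{0}$, this shows the costate — and thus $\psi^*$ — is constant all the way back from $t=T$. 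The backward-in-time framing in the statement is exactly the statement that this constancy propagates from the known terminal boundary condition backward until the arc terminates, which happens either when it hits the state constraint $S=0$ (at which point a constrained-arc analysis, handled elsewhere, takes over) or when it reaches $\x_0$ (the arc fills the whole interval $[0,T]$).

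The main obstacle I anticipate is not the constancy itself — that is essentially immediate from $\partial(\dot\x)/\partial\x = 0$ — but rather being careful about (i) selecting the correct branch of $\psi^*$ from the stationarity condition, since \eqref{eq:given_phi} shows there are two headings consistent with a given effective heading $\phi$, and only the one maximizing $\|\x_f\|$ (equivalently, the one with the larger effective speed $v$, or the correct sign in $-\lambda_x\cos\psi^* - \lambda_y\sin\psi^* $ having the right sign for a max) is optimal; and (ii) justifying that we may indeed restrict attention to an interior (constraint-inactive) sub-arc, so that the unconstrained PMP applies verbatim there — the constrained portions being treated separately. A secondary technical point is ruling out the degenerate/abnormal case $\bm{\lambda}(T) = \mathbf{0}$, which is excluded because $\x_f \neq \mathbf{0}$ (the evader is always outside the Proximity Circle, so $\|\x_f\| \geq 1$), guaranteeing a nontrivial, nonzero constant costate and hence a well-defined constant $\psi^*$.
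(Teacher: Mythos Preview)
Your proposal is correct and follows essentially the same approach as the paper's own proof: form the Hamiltonian, observe that the costate dynamics vanish because the drift in~\eqref{eq:f_nondim} is state-independent, use the terminal transversality condition $\bm{\lambda}_f = 2\x_f$, and conclude that the heading maximizing $\H$ is constant back to either the constraint or the initial condition. Your added remarks on branch selection and on excluding the abnormal case $\bm{\lambda}(T)=\mathbf{0}$ (since $\|\x_f\|\ge 1$) are careful touches that the paper leaves implicit.
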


\begin{proof}
The Hamiltonian for the system is
\begin{equation}
    \label{eq:H_generic}
    \H = \dot{\mathbf{x}}\T \bm{λ} = λ_x \left( \mu\cos ψ - 1 \right) + λ_y \mu \sin ψ,
\end{equation}
where $\bm{λ} = \begin{bmatrix} λ_x & λ_y\end{bmatrix}\T$ is the adjoint vector.
The terminal adjoint values are given by~\cite{bryson1975applied}
\begin{equation}
    \label{eq:λf}
    \bm{λ}_f\T = \frac{\partial Φ}{\partial \mathbf{x}_{f}} + γ \frac{\partial F}{\partial \mathbf{x}_{f}} = \begin{bmatrix} 2x_f & 2y_f\end{bmatrix},
\end{equation}
where $γ$ is an additional adjoint variable and $\x_{f} \equiv \begin{bmatrix}x_f & y_f\end{bmatrix}\T$ is the terminal state.
The optimal adjoint dynamics are given by~\cite{bryson1975applied}
\begin{equation}
    \label{eq:λdot}
    \dot{\bm{λ}} = -\frac{\partial \H}{\partial \mathbf{x}} = \mathbf{0},
\end{equation}
since the state does not appear in the Hamiltonian.
Therefore, $\bm{λ}(t) = \bm{λ}_f$ for all $t< t_f$ such that the constraint $S=0$ has not been activated.
%
Evaluating~\Cref{eq:H_generic} at $t = t_f$ and substituting in the terminal adjoint values, \Cref{eq:λf}, gives
\begin{equation}
    \label{eq:H_f_substituted}
    \H_f  = 2x_f\left( \mu\cos ψ_f - 1 \right) + 2y_f \mu\sin ψ_f .
\end{equation}
The optimal Evader heading at final time must maximize this terminal Hamiltonian, which implies
\begin{equation}
    \label{eq:ψf}
    \cos ψ_f^* = \frac{x_f}{\sqrt{x_f^2 + y_f^2}}, \qquad
    \sin ψ_f^* = \frac{y_f}{\sqrt{x_f^2 + y_f^2}},
\end{equation}
i.e., the terminal heading is aligned with the vector from the origin to the Evader's terminal position.
Similarly, for all $t < t_f$ the heading $ψ_f^*$ maximizes $\H$ in~\Cref{eq:H_generic}, which implies the optimal heading $ψ^*$ is constant and equal to $ψ^*_f$ for all $t\le t_f$ such that the constraint is not activated.
%
\end{proof}

For a constant heading, $ψ$, and first assuming that the constraint $S = 0$ does not become active, the system~\Cref{eq:f_nondim} can be solved:
\begin{equation}
    \label{eq:xy_constantψ}
    \begin{aligned}
        \x(t) = \begin{bmatrix}
            x_1 + t \left( \mu \cos ψ - 1 \right)\\
            y_1 + t \mu\sin ψ
        \end{bmatrix}
    \end{aligned}
\end{equation}
for $t \in \left[ t_1, T \right]$ where $\x(t_1) \equiv \begin{bmatrix}x_1 & y_1\end{bmatrix}\T$.
Thus the terminal position is given by
\begin{equation}
    \label{eq:xyf}
    \begin{bmatrix} x_f \\ y_f\end{bmatrix} =
    \begin{bmatrix}
        x_1 + (T - t_1) \left( \mu \cos ψ - 1 \right) \\
        y_1 + (T - t_1) \mu \sin ψ 
    \end{bmatrix}.
\end{equation}
Combining \Cref{eq:ψf} and \Cref{eq:xyf}, the optimal heading can be rewritten as
\begin{align}
    \label{eq:ψ_xy}
\begin{split}
    \cos ψ^* = \frac{x_1 - (T - t_1)}{\|\x_{f}\| - \mu(T - t_1)},\\
    \sin ψ^* = \frac{y_1}{\|\x_{f}\| - \mu(T - t_1)},
\end{split}
\end{align}  
or, more compactly and generally, as
\begin{equation}
    \label{eq:ψ_xy_compact}
    ψ^*(t,\x) = \atan\left( y, ~x - (T - t) \right),
\end{equation}
where $\begin{bmatrix}x & y\end{bmatrix}\T = \x(t)$.

\begin{figure}
    \centering
    \includegraphics[width=0.8\linewidth]{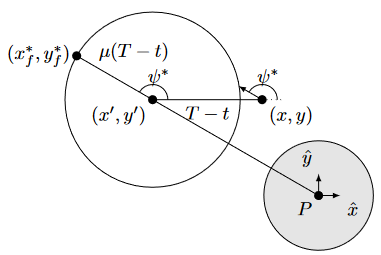}
    \caption{Geometric interpretation of the optimal unconstrained heading.}
    \label{fig:unconstrained-heading}
\end{figure}

This result also comes about via an intuitive geometric interpretation.
Suppose at time $t$, $E$'s position in the Pursuer-fixed frame is $\x(t) = \begin{bmatrix}     x & y \end{bmatrix}\T$.
The drift due to $P$'s motion can be applied for the remaining time until $T$, giving a virtual point $ \begin{bmatrix}     x' & y' \end{bmatrix}\T = \begin{bmatrix}     (x-(T-t)) & y \end{bmatrix}\T$.
Then, one may consider the Evader's reachability set starting from this virtual point, which is a circle centered at that virtual point with radius $\mu(T - t)$.
By inspection, the point on the circle furthest from $P$ lies on the line passing from the origin the Pursuer-fixed frame through $\begin{bmatrix}x'& y'\end{bmatrix}\T$, and thus the optimal heading is the one given in~\Cref{eq:ψ_xy_compact} (see~\Cref{fig:unconstrained-heading}).

The result in this section provides the unconstrained optimal trajectory, and it will be the full solution for the case where the entire trajectory does not intersect with the interior of the Proximity Circle (i.e., $S(\x(t))$ is always non-negative.). Section~\ref{sec:constrained_trajectories} will show that this trajectory is valid (with slight modification) to describe the final portion of the constrained trajectory, after the constraint has been eclipsed.

\section{Trajectories where the Constraint is Activated}
\label{sec:constrained_trajectories}

We now turn to the case where the constraint necessarily becomes active at some time. 
\Cref{lem:optimal_control} in \Cref{sec:optimal_control} states that the last portion of the optimal trajectory will be a straight line segment that exits from the Proximity Circle at some time. 
Therefore, unless $\|\x_0\|=1$, the optimal trajectory will enter the Proximity Circle, then ride along this circle for some period of time, and then leave the constraint. 
This section will consider each of these three phases and show that (i) There is a unique optimal trajectory for the first unconstrained phase (entering the constraint) that is independent of $T$, (ii) The time spent on the constraint can be described by the solution to an elliptic integral, and (iii) The optimal exit time is the first moment that the (modified) unconstrained optimal control from Section~\ref{sec:optimal_control} is feasible, and the trajectory remains unconstrained thereafter.

The following lemma states the necessary and sufficient condition for which a straight line trajectory will intersect the Proximity Circle.
%
%
\begin{lemma}\label{lm:constraintActive}
    For given initial location $\x_0$, a straight line trajectory with heading $\psi$ intersects with the Proximity Circle iff $\phi_{\tan} + \atan  \left(\mu \sin ψ,\ 1 - \mu \cos ψ \right)< \pi$ and $T > t_c$, where  
    \begin{equation} \label{eq:phi_tan}
    ϕ_{\tan} = \pi - \sin^{-1} \left( \tfrac{1}{\|\x_0\|} \right) + \atan \left( y_0,\ x_0 \right),
\end{equation}
 and
    \begin{equation}
    \label{eq:t_c}
    \begin{multlined}
        t_c = \frac{1}{v(ψ)}\left[-(x_0\cos\phi + y_0\sin\phi) - \right. \\
        \left. \sqrt{1 - (x_0\sin\phi - y_0\cos\phi)^2} \right]
        \end{multlined}
    \end{equation}
    is the amount of time required to collide with the Proximity Circle, $v(\psi)$ and $\phi$ are obtained from \eqref{eq:given_psi}.
\end{lemma}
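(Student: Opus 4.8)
The plan is to work entirely in the Pursuer-fixed frame and reduce the question to locating the roots of a quadratic in $t$. For a constant heading $\psi$, \eqref{eq:given_psi} produces a constant effective speed and heading $(v,\phi)=(v(\psi),\phi(\psi))$, so the trajectory is the ray $\x(t)=\x_0+tv\,[\cos\phi,\ \sin\phi]\T$. I would write the collision condition $\|\x(t)\|^2=1$ as $g(t):=v^2t^2+2v(x_0\cos\phi+y_0\sin\phi)\,t+(\|\x_0\|^2-1)=0$ and use the orthogonal decomposition $\|\x_0\|^2=(x_0\cos\phi+y_0\sin\phi)^2+(x_0\sin\phi-y_0\cos\phi)^2$ to reduce the discriminant of $g$ to $4v^2\bigl[1-(x_0\sin\phi-y_0\cos\phi)^2\bigr]$. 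This shows at once that $g$ has real roots iff the perpendicular distance from $P$ to the line is at most $1$, and that in that case the smaller root is exactly the $t_c$ of \eqref{eq:t_c}.

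Next I would argue that, since $E$ starts strictly outside the circle, $g(0)=\|\x_0\|^2-1>0$ and $g$ is a strictly convex parabola; hence $g$ vanishes somewhere on $(0,T]$ iff $g$ has real roots, the smaller root $t_c$ is positive, and $t_c\le T$. Positivity of $t_c$ is cheap: the product of the two roots equals $(\|\x_0\|^2-1)/v^2>0$, so the roots share a sign, and that sign is positive iff their sum $-2(x_0\cos\phi+y_0\sin\phi)/v$ is positive, i.e.\ iff $x_0\cos\phi+y_0\sin\phi<0$ (the ray initially heads toward $P$). So, equivalently, the trajectory meets the circle within $[0,T]$ iff $(x_0\sin\phi-y_0\cos\phi)^2\le1$, $x_0\cos\phi+y_0\sin\phi<0$, and $t_c\le T$.

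It remains to show the first two (algebraic) conditions together are equivalent to $\phi_{\tan}+\atan(\mu\sin\psi,\ 1-\mu\cos\psi)<\pi$. I would first note that, since $1-\mu\cos\psi>0$, \eqref{eq:given_psi} gives $\atan(\mu\sin\psi,\ 1-\mu\cos\psi)=\pi-\phi$, so the inequality reads $\phi>\phi_{\tan}$. Writing $\x_0=\|\x_0\|[\cos\theta_0,\ \sin\theta_0]\T$ with $\theta_0=\atan(y_0,x_0)$ and $\beta_0=\sin^{-1}(1/\|\x_0\|)$, the two tangent rays from $\x_0$ to the circle point in the directions $(\theta_0+\pi)\mp\beta_0$, the smaller of which is the $\phi_{\tan}$ of \eqref{eq:phi_tan}; and the forward ray with direction $\phi$ enters the closed unit disk iff $\phi\in[\phi_{\tan},\ (\theta_0+\pi)+\beta_0]$. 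The plan is then to verify, using the standing assumptions $\|\x_0\|>1$, $x_0>0$, $y_0\ge0$ (so $\theta_0\in[0,\pi/2)$) together with the a priori bound $\phi\in[\pi-\sin^{-1}\mu,\ \pi+\sin^{-1}\mu]$ coming from \eqref{eq:v_mu_phi_psi}, that: (i) if $\phi\le\phi_{\tan}$, then either the discriminant is negative or $x_0\cos\phi+y_0\sin\phi>0$; (ii) if $\phi>(\theta_0+\pi)+\beta_0$, then the discriminant is negative, so $t_c\notin\R$ and ``$T>t_c$'' cannot hold; and (iii) if $\phi_{\tan}<\phi\le(\theta_0+\pi)+\beta_0$, then the discriminant is nonnegative and $x_0\cos\phi+y_0\sin\phi<0$, so $t_c$ is real and positive. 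I expect (i)--(iii) to be the main obstacle: it is essentially $\atan2$-branch bookkeeping, and it relies on $\beta_0<\pi/2$ and $\theta_0<\pi/2$ to preclude angular wraparound (so that all of $\phi$, $\phi_{\tan}$, $(\theta_0+\pi)+\beta_0$ lie in a common interval of length $<2\pi$). Combining (i)--(iii) with the reduction of the previous paragraph yields the stated equivalence, with $t_c$ as in \eqref{eq:t_c}.
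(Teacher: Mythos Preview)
Your plan is sound and follows essentially the same route as the paper: both rewrite the angular inequality as $\phi>\phi_{\tan}$ (you via $\atan(\mu\sin\psi,1-\mu\cos\psi)=\pi-\phi$, the paper by stating that the effective heading in the Pursuer-fixed frame equals $\pi-\atan(\mu\sin\psi,1-\mu\cos\psi)$), identify $\phi_{\tan}$ as the tangent direction from $\x_0$ to the Proximity Circle, and then compute the collision time. The execution differs in flavor: the paper argues geometrically (the tangent angle is asserted ``by inspection'' via \Cref{fig:tanEntry}, and the bracketed distance in \eqref{eq:t_c} is obtained from the Law of Cosines), whereas you set up the quadratic $g(t)=\|\x(t)\|^2-1$ and read $t_c$ off as its smaller root, using the orthogonal decomposition of $\|\x_0\|^2$ to simplify the discriminant to the perpendicular-distance condition. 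These are equivalent computations.

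Your treatment is in fact more careful than the paper's on one point: the paper's proof does not discuss the ``upper tangent'' direction $(\theta_0+\pi)+\beta_0$ at all, implicitly taking ``$\phi>\phi_{\tan}$'' to mean the ray enters the disk. Your case (ii) closes this by noting that when $\phi$ overshoots the upper tangent the discriminant is negative, so $t_c\notin\R$ and the conjunction ``$\phi>\phi_{\tan}$ and $T>t_c$'' fails, which keeps the biconditional intact. The branch bookkeeping in (i)--(iii) that you flag as the main obstacle is indeed just that---routine once you pin down $\theta_0\in[0,\tfrac{\pi}{2})$, $\beta_0\in(0,\tfrac{\pi}{2})$, and $\phi\in[\pi-\sin^{-1}\mu,\pi+\sin^{-1}\mu]$---and the paper effectively skips it by appealing to the figure.
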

\begin{proof}
    One may verify that (c.f., \Cref{fig:tanEntry}) the angle for which the effective heading of $E$ in the Pursuer-fixed frame is tangent to the Proximity Circle is given by $ϕ_{\tan}$.
    On the other hand, for a given $\psi$, the effective heading angle in Pursuer-fixed frame is $\pi - \atan \left(μ \sin ψ,\ 1 - μ \cos ψ \right)$, as given by \eqref{eq:given_psi}.
    Firstly, if $ϕ_{\tan} + \atan \left(μ \sin ψ,\ 1 - μ \cos ψ \right) \ge \pi$ then the heading $ψ$ results in $ϕ$ that points `outside' the Proximity Circle and thus $E$ will not enter it.
    Secondly, if $ψ$ results in $ϕ$ that points `inside' the Proximity Circle and $T$ is less than the time it takes for $E$ to enter the Proximity Circle under $ψ$ then $E$ will not enter it.
    The collision time $t_c$ in~\Cref{eq:t_c} is obtained by dividing the distance of $E$ from the Proximity Circle along the trajectory associated with $ψ$ by its velocity along this trajectory, $v(ψ)$.
    The bracketed expression in~\Cref{eq:t_c} represents the aforementioned distance and may be obtained via the Law of Cosines (c.f.~\Cref{fig:tanEntry} but for general $ϕ(ψ)$).
\end{proof}
\begin{figure}
    \centering
    \includegraphics[width=0.8\linewidth]{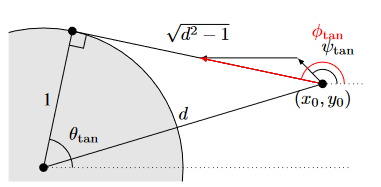}
    \caption{Geometric relationship of $ϕ_{\tan}$ and the Proximity Circle entry point defined by $θ_{\tan}$.}
    \label{fig:tanEntry}
\end{figure}

Lemma~\ref{lm:constraintActive} implies that if the optimal heading angle $\psi^*$ from \eqref{eq:ψ_xy_compact} is such that either $T<t_c$ or $\phi_{\tan} + \atan  \left( \mu \sin ψ^*,\ 1 - \mu \cos ψ^* \right)\geq \pi$, then the entire straight line trajectory with heading $\psi^*$ is unconstrained, and the analysis from Section~\ref{sec:optimal_control} determines the full solution. Otherwise, the constraint must become active at some time.
In \Cref{fig:constrainted_activated_region} we divide the initial locations $\x_0$ into regions depending on whether the constraint becomes active or not.
The following subsections analyze each phase of the constrained trajectory.
\begin{figure}
    \centering
    \includegraphics[trim = 155 320 150 350, clip, width = \linewidth]{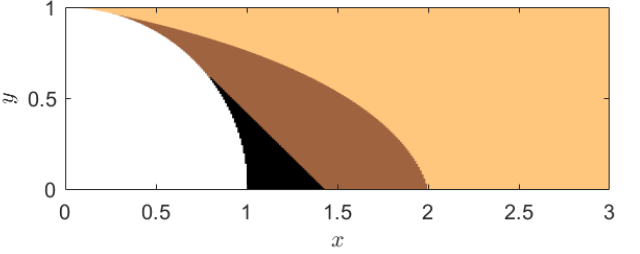}
    \caption{The initial positions of $E$ are divided into three regions depending on whether $E$ will be able escape $P$ and whether the escape trajectory will activate the proximity constraint. \colorbox{black}{O}: no-escape zone $\Omega_{\rm{capture}}$, \colorbox{brown}{\textcolor{brown}{O}}: Constraint is activated, \colorbox{orange!45}{\textcolor{orange!45}{O}}: optimal trajectories are unconstrained. For this example $T=2$ and $\mu = 0.7$.
    }
    \label{fig:constrainted_activated_region}
\end{figure}

\subsection{Phase I: Entering the Constraint}\label{sec:enteringConstraint}

%

From \eqref{eq:given_phi} we notice that two different heading angles $\psi_{\tan}$ (i.e., $\psi_{\tan} = ϕ_{\tan} - \sin^{-1} \left( \nicefrac{\sin ϕ_{\tan}}{\mu}\right)$ or $\psi_{\tan} = ϕ_{\tan} + \sin^{-1} \left( \nicefrac{\sin ϕ_{\tan}}{\mu}\right) - \pi$) of $E$ lead to the same $\phi_{\tan}$. 
The heading angle $\psi_{\tan} = ϕ_{\tan} - \sin^{-1} \left( \nicefrac{\sin ϕ_{\tan}}{\mu}\right)$ requires a shorter time to reach the tangent point since the speed $v$ along this heading is higher. 
Therefore, the (shortest) time required to reach the tangent point from $\x_0$ is
\begin{equation}\label{eq:t_tan}
    t_{\tan} = \frac{\sqrt{\|\x_0\|^2 - 1}}{-\cos\phi_{\tan} + \sqrt{\mu^2 - \sin^2 \phi_{\tan}}}.
\end{equation}

At time $t_{\tan}$, $E$ will be at an angle $\theta_{\tan}$ on the Proximity Circle, as shown in \Cref{fig:tanEntry}, where
\begin{align}\label{eq:thetaTan}
    \theta_{\tan} = \cos^{-1}\left( \tfrac{1}{\|\x_0\|}\right) + \atan \left( y_0,\ x_0 \right). 
\end{align}

\begin{remark}\label{rm:thetaEntry}
    $t_{\tan}, \phi_{\tan}$ and $\theta_{\tan}$ depend on the initial location of $E$ $\x_0$ and not on $T$ or $\mu$. 
    Moreover, from Lemma~\ref{lm:initConInABand}, all constrained trajectories have $\thetaEntry\in[0,\frac{\pi}{2})$.
\end{remark}   

Consider the feasible entry location $\theta_{\tan}$ from~\eqref{eq:thetaTan}, which is independent of $T$. We will now show that, given that the constraint cannot be avoided entirely, $\thetaEntry$ is the best entry location to the Proximity Circle to reach any $\theta > \thetaEntry$ on the Proximity Circle.

\begin{theorem}\label{th:thetaEntry}
    The optimal Evader trajectory to reach a location on the constraint $\theta\geq\theta_{\tan}$ is to follow a straight line trajectory to $\theta_{\tan}$ and then proceed along the constraint to $\theta$.
\end{theorem}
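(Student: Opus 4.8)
The plan is to argue by contradiction: suppose there is a better trajectory $\x(\cdot)$ that also ends (or passes through) the target point $\theta$ on the Proximity Circle but does \emph{not} enter the circle at $\theta_{\tan}$ and ride the boundary. Since Lemma~\ref{lm:constraintActive} and Remark~\ref{rm:thetaEntry} tell us that $\theta_{\tan}$ is the \emph{unique tangent entry angle} reachable by a straight line from $\x_0$, any competing trajectory that reaches $\theta > \theta_{\tan}$ must either (a) enter the interior of the Proximity Circle strictly before $\theta_{\tan}$ — which violates the capture-avoidance constraint \eqref{eq:c_nondim} and is thus infeasible — or (b) stay outside the circle and touch it for the first time at some $\theta' \ne \theta_{\tan}$; but a straight line from $\x_0$ that first meets the circle does so either tangentially (at $\theta_{\tan}$, by \eqref{eq:thetaTan}) or transversally, and a transversal first contact means the line immediately crosses into the interior, again infeasible. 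So the first contact with the constraint, if it happens, is forced to be at $\theta_{\tan}$. It remains to handle curved (non-straight) approaches and to show the straight-line approach is time-optimal among all feasible approaches to $\theta_{\tan}$.

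The key comparison is a \emph{time-of-arrival} argument in the Pursuer-fixed frame. For any feasible approach trajectory from $\x_0$ to a point $\p$ on the Proximity Circle, the displacement $\p - \x_0$ in the Pursuer-fixed frame equals $\int_0^{t_a} \dot\x(s)\,\d s$, and since $\|\dot\x(s)\| = v(\psi(s)) \le 1+\mu$ is bounded but \emph{not} constant, the naive ``straight line is shortest'' bound is not immediate. Instead I would exploit the structure of \eqref{eq:v_mu_phi_psi}: the reachable-velocity set $\{(\mu\cos\psi - 1,\ \mu\sin\psi)\}$ is a disk of radius $\mu$ centered at $(-1,0)$. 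Reaching a point that is at distance $\|\x_0\|^2 - 1$-worth of travel away, in the direction that is tangent to the circle, is exactly the support-function / reachable-set computation already used to derive $t_{\tan}$ in \eqref{eq:t_tan}: among all feasible controls, the heading $\psi_{\tan} = \phi_{\tan} - \sin^{-1}(\sin\phi_{\tan}/\mu)$ is the one whose effective velocity $v(\psi_{\tan})$ has the \emph{largest component} along the ray from $\x_0$ to the tangent point, so the straight dash to $\theta_{\tan}$ minimizes arrival time to the constraint. Any other feasible entry point $\theta$ reachable only by going around requires strictly more time, and — crucially — by Lemma~\ref{lm:noEscape} combined with the fact that riding the constraint at speed $v$ with $\dot d = v\cos(\phi-\theta) = 0$ is feasible precisely for $\theta \ge \cos^{-1}\mu$, the arc from $\theta_{\tan}$ onward is itself a feasible trajectory, and it is the one that maximizes distance from the origin at each moment subject to staying on the constraint.

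Concretely, the steps I would carry out are: (1) invoke Remark~\ref{rm:thetaEntry} and the feasibility argument above to establish that any feasible trajectory touching the constraint to reach $\theta \ge \theta_{\tan}$ must first reach the boundary at exactly $\theta_{\tan}$; (2) show, via the reachable-velocity-disk support-function argument, that the straight segment with heading $\psi_{\tan}$ is the minimum-time route from $\x_0$ to the point at angle $\theta_{\tan}$, so any extra maneuvering only wastes time; (3) observe that, having arrived at $\theta_{\tan}$ with time to spare, the Evader's position on the boundary can only be advanced toward larger $\theta$ (since it must avoid $|\theta| < \cos^{-1}\mu$ by Lemma~\ref{lm:noEscape}, and $\theta_{\tan} \in [\cos^{-1}\mu, \pi/2)$ on the relevant region), and riding the boundary is the unique feasible way to ``transit'' the constrained region while staying at distance exactly $1$; (4) conclude that among all trajectories reaching a specified $\theta \ge \theta_{\tan}$, the straight-line-then-arc trajectory uses the least time and hence dominates — any faster-arriving alternative doesn't exist, and any slower one reaches $\theta$ later (or fails to reach it at all within the constraint-feasible set), which is suboptimal for the Evader who wants to spend remaining time escaping outward.

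The main obstacle is step~(2): making the time-optimality of the straight tangent dash rigorous against \emph{curved} feasible competitors whose speed varies along the path. The clean way is to note that for \emph{any} trajectory, $\langle \p - \x_0,\ \hat u\rangle = \int_0^{t_a}\langle \dot\x(s),\hat u\rangle\,\d s \le t_a \cdot \max_{\psi}\langle (\mu\cos\psi-1,\mu\sin\psi),\hat u\rangle$ for the unit vector $\hat u$ pointing from $\x_0$ to the tangent point, and the maximum of that support function is achieved by precisely the heading $\psi_{\tan}$; equality (hence minimal $t_a$) forces the straight-line control. I expect the bookkeeping to line up with \eqref{eq:t_tan}, but the subtlety of handling the constraint boundary during the approach (the competitor might graze the circle early) is what needs care, and it is resolved by the feasibility/uniqueness observation in step~(1).
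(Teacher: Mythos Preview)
Your support-function argument in step~(2) is a sound, elementary replacement for the paper's citation of Zermelo's navigation problem: in a constant flow field, the time-optimal unconstrained path between two points is obtained by the constant heading that maximizes velocity along the displacement, and your inequality $\langle \p-\x_0,\hat u\rangle \le t_a\max_\psi\langle(\mu\cos\psi-1,\mu\sin\psi),\hat u\rangle$ captures exactly that.  On this point your approach and the paper's coincide in substance.

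The genuine gap is step~(1).  You assert that \emph{any} feasible trajectory reaching the target $\theta\ge\thetaEntry$ must first touch the Proximity Circle at exactly $\thetaEntry$, and you justify this only for \emph{straight-line} competitors (transversal first contact immediately enters the interior, tangential first contact is by construction at $\thetaEntry$).  But curved competitors are not forced to first touch at $\thetaEntry$: a feasible curved path can approach the circle and make first tangential contact at any $\theta'\in[\cos^{-1}\mu,\tfrac{\pi}{2}]$ (the radial-velocity condition $\dot d\ge 0$ is satisfiable precisely on that arc), so in particular at $\theta'>\thetaEntry$ or at $\theta'\in[\cos^{-1}\mu,\thetaEntry)$.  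Your step~(2) bound then only controls the time to reach the \emph{specific point} $P_{\thetaEntry}$; it gives no lower bound on the time to reach $P_\theta$ along a competitor that never visits $P_{\thetaEntry}$ at all.

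The paper closes this gap differently: it does not claim first contact must be at $\thetaEntry$, but instead invokes the Zermelo structure result that \emph{off the constraint} the time-optimal control is a constant heading.  This restricts the class of competitors to ``straight segment $+$ arc'' from the outset; for straight segments the geometry forces first contact at some $\theta'\le\thetaEntry$, and then the Zermelo comparison (entering at $\theta'<\thetaEntry$ and riding up is dominated by the feasible straight dash to $\thetaEntry$) finishes the argument.  To repair your proof you need either that structural reduction, or a direct lower bound on the arrival time at $P_\theta$ that holds for arbitrary curved feasible paths---your current support-function inequality, applied with $\hat u$ along the tangent direction, bounds only the time to \emph{cross the tangent line}, not the time to reach $P_\theta$ beyond it.
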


 \begin{proof} 
Any straight line trajectory to a $\theta > \thetaEntry$ will hit the Proximity Circle at an angle $\theta' < \thetaEntry$. 
 Therefore, $E$ will be forced to ride along the Proximity Circle from $\theta'$ to $\theta$, while passing $\thetaEntry$ on the way. 
 The theorem is proved by showing that the time optimal trajectory to reach $\thetaEntry$ is by following the heading $\phi_{\tan}$ (c.f. \eqref{eq:phi_tan}) in the Pursuer-fixed frame. 
 To that end, we note that for any trajectory in which $E$ enters the constraint at some $\theta < \theta_{\tan}$ and then remains on the constraint until $\theta = \theta_{\tan}$ is suboptimal since the straight line trajectory to $\thetaEntry$ is feasible~\cite{zermelo1931navigationsproblem}.
 \end{proof}

\subsection{Phase II: Riding the Constraint}

%
For values of $\theta$ which admit survival on the constraint, the effective heading ($\phi$) of $E$ which keeps the constraint active satisfies $\cos ϕ = -\sin θ$ and $\sin ϕ = \cos θ$.
This relationship can be substituted into~\Cref{eq:given_phi} to obtain the angular velocity of $E$ along the constraint
\begin{equation}
    \dot{θ} = v(\phi) = \sqrt{\mu^2 - \cos^2 θ} + \sin θ.
\end{equation}
This expression can be rearranged and integrated to solve for the time spent on the constraint as a function of a starting ($\theta_1$) and ending ($\theta_2$) angular position:
\begin{align}
\begin{split} \label{eq:t_arc}
    \tarc(\theta_1, \theta_2) =& \frac{1}{1 - \mu^2} \left( \cos θ_1 - \cos θ_2 \right) \\
    &- \frac{\mu}{1 - \mu^2} \int_{θ_1}^{θ_2} \sqrt{1 - \frac{\cos^2 θ}{\mu^2}} \mathop{\mathrm{d}θ},        
\end{split}
\end{align}
where $\pi > |\theta_1| \ge \cos^{-1}\mu$.
The second term is an elliptic integral of the second kind and does not have a closed-form solution.

\subsection{Phase III: Exiting the Constraint}
From Lemma~\ref{lm:noEscape} and Remark~\ref{rm:thetaEntry}, the constrained portion of any trajectory begins at $\thetaEntry\in[\cos^{-1} μ,\frac{\pi}{2})$ to survive the constraint. From Lemma~\ref{lm:initConInABand}, exit occurs no later than $\theta=\frac{\pi}{2}$.
Let us denote
\begin{align} \label{eq:t_remain}
    \trm(\theta) = T - \ts(\theta)
\end{align}
to be the remaining duration of the game, where $\ts(\theta)$ is defined as the time spent to reach location $\theta$ on the Proximity Circle:
\begin{align} \label{eq:t_s}
        \ts(\theta) = t_{\tan} + \tarc(\thetaEntry, \theta),
    \end{align}
 
If $E$ wishes to exit the Proximity Circle at $\trm$, then according to \eqref{eq:ψ_xy_compact},  $E$'s optimal heading in the global frame is
\begin{align}\label{eq:finalUnconstrained}
    \psi = \atan(\sin\theta,\ \cos\theta - \trm(\theta)). 
\end{align}
\begin{theorem}  \label{thm:exit_angle}
    The optimal exit angle ($\thetaExit$) satisfies
    \begin{align} \label{eq:exitCondition}
        \atan(\mu\sin\psiExit,\ \mu\cos\psiExit - 1) - \thetaExit = \tfrac{\pi}{2}, 
    \end{align}
    where $\psiExit$ is obtained from \eqref{eq:finalUnconstrained} by substituting $\thetaExit$ for $\theta$.
    That is, at the optimal exit angle, the optimal unconstrained trajectory is tangent to the Proximity Circle.
\end{theorem}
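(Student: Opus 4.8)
\emph{Plan.} I would parametrize the final, unconstrained leg by the exit angle $\theta$ and reduce everything to a one-dimensional optimization over \emph{feasible} exit angles. By Lemma~\ref{lem:optimal_control}, once $E$ leaves the constraint for the last time — at angle $\theta$, time $\ts(\theta)$, remaining time $\trm(\theta)=T-\ts(\theta)$ — it moves in a straight line with the heading \eqref{eq:finalUnconstrained}. The virtual-point construction of Section~\ref{sec:optimal_control} (translate $[\cos\theta\ \ \sin\theta]\T$ by $[-\trm(\theta)\ \ 0]\T$, then add the reachability disk of radius $\mu\trm(\theta)$) gives the terminal distance $\|\x_f\|=g(\theta):=\rho(\theta)+\mu\trm(\theta)$ with $\rho(\theta):=\sqrt{1-2\trm(\theta)\cos\theta+\trm(\theta)^2}$. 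Thus $J=g(\theta)^2$ and it suffices to maximize $g$.

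Next I would settle feasibility. Since the last leg starts on the Proximity Circle, it keeps $S\ge0$ iff its initial effective heading $\phi$ points out of, or tangent to, the circle, i.e. iff $\cos(\phi-\theta)\ge0$. Projecting the leg's (constant) velocity $\bigl(\x_f-[\cos\theta\ \ \sin\theta]\T\bigr)/\trm$ onto the radial direction $[\cos\theta\ \ \sin\theta]\T$ gives, after using $g=\rho+\mu\trm$,
\begin{equation*}
    v\cos(\phi-\theta)=\frac{\mu-g(\theta)\cos\theta}{\rho(\theta)}.
\end{equation*}
Hence exiting at $\theta$ is feasible iff $g(\theta)\cos\theta\le\mu$, and the borderline $g(\theta)\cos\theta=\mu$ is exactly \eqref{eq:exitCondition} (on the admissible range of $\phi$, $\cos(\phi-\theta)=0$ forces $\phi-\theta=\tfrac{\pi}{2}$, not $-\tfrac{\pi}{2}$).

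The core step is then to show $g$ is non-increasing along the constraint. Differentiating, with $\dot\theta=\sqrt{\mu^2-\cos^2\theta}+\sin\theta$ from Phase~II so that $\trm'(\theta)=-1/\dot\theta$,
\begin{equation*}
    g'(\theta)=\frac{\cos\theta\,(1-\trm\cos\theta)+\trm\sin\theta\sqrt{\mu^2-\cos^2\theta}-\mu\rho}{\rho\,\dot\theta}.
\end{equation*}
The crux is the identity, with $a:=\sqrt{\mu^2-\cos^2\theta}$ and using $\rho^2=(1-\trm\cos\theta)^2+\trm^2\sin^2\theta$ together with $\cos^2\theta+a^2=\mu^2$,
\begin{equation*}
    \mu^2\rho^2=\bigl(\cos\theta\,(1-\trm\cos\theta)+\trm a\sin\theta\bigr)^2+\bigl((1-\trm\cos\theta)\,a-\trm\sin\theta\cos\theta\bigr)^2,
\end{equation*}
which shows $\mu\rho$ dominates the first bracket, so the numerator of $g'$ is $\le0$; thus $g'\le0$, with equality iff the second bracket vanishes, i.e. iff $g(\theta)\cos\theta=\mu$.

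Finally I would assemble the conclusion. Because $g$ is positive and non-increasing and $\cos\theta$ is positive and decreasing on $[\thetaEntry,\tfrac{\pi}{2}]$, the product $g(\theta)\cos\theta$ is strictly decreasing; in the constrained case it starts above $\mu$ at $\thetaEntry$ (otherwise $E$ would not ride the constraint at all) and decreases to $0<\mu$ at $\tfrac{\pi}{2}$, so there is a unique $\thetaExit\in(\thetaEntry,\tfrac{\pi}{2})$ with $g(\thetaExit)\cos\thetaExit=\mu$. Exits at $\theta<\thetaExit$ are infeasible — the straight leg re-enters the circle, so $E$ must remain on the constraint — while for $\theta\ge\thetaExit$ the non-increasing $g$ is largest at $\thetaExit$; hence the optimal exit is at $\thetaExit$, which satisfies \eqref{eq:exitCondition} (and, incidentally, $g'$ vanishes there). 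I expect the sum-of-squares identity, together with the recognition that the optimum sits at the feasibility boundary rather than at an interior critical point (the two coincide here), to be the main obstacle; the monotonicity and endpoint bookkeeping are routine, aside from the degenerate case where $T$ is too small for $\thetaExit$ to be reached, in which the optimal trajectory remains on the constraint through $t=T$.
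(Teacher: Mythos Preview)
Your proposal is correct and takes a genuinely different, more complete route than the paper. The paper's argument is essentially a verification: it takes an angle satisfying \eqref{eq:exitCondition}, observes that the unconstrained heading \eqref{eq:finalUnconstrained} there is tangent to the Proximity Circle (hence the straight segment is feasible), and concludes that the unconstrained optimum from that point is attainable. What the paper does \emph{not} do explicitly is compare this exit angle with any other candidate; it is implicitly invoking the constrained-optimal-control principle that one leaves a state constraint at the first instant the unconstrained extremal becomes admissible.

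You instead reduce the problem to a one-dimensional optimization of the value $g(\theta)=\rho(\theta)+\mu\,\trm(\theta)$ over feasible exit angles and actually \emph{prove} optimality. The two ingredients you introduce are (i) the feasibility characterization $g(\theta)\cos\theta\le\mu$, which you show coincides with \eqref{eq:exit_time_angle} and hence with \eqref{eq:exitCondition}, and (ii) the monotonicity $g'(\theta)\le 0$ via the identity $\mu^2\rho^2=A^2+B^2$ with $A=\cos\theta(1-\trm\cos\theta)+\trm a\sin\theta$ and $B=(1-\trm\cos\theta)a-\trm\sin\theta\cos\theta$, which is not in the paper. Together these give that $g(\theta)\cos\theta$ is strictly decreasing, so the feasibility boundary is a single point $\thetaExit$; earlier exits make the unconstrained heading re-enter the circle, while later exits are feasible but yield smaller $g$. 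This is strictly more informative than the paper's proof: you recover both the tangency characterization and Proposition~\ref{prop:thetaExit}-style uniqueness in one stroke, and you show that the optimum is simultaneously the feasibility boundary and a stationary point of $g$ (since $g'(\thetaExit)=0$ exactly when $B=0$, i.e., when $g\cos\theta=\mu$). The only caveat is that your remark about a degenerate ``$T$ too small'' case is unnecessary here: once the trajectory is in the constrained regime ($T>T_c$), equation \eqref{eq:theta_exit} always has a solution with $\thetaExit\in(\thetaEntry,\tfrac{\pi}{2})$ and $\ts(\thetaExit)<T$.
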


\begin{proof} 
    %
    Consider the point $\x_{\rm{exit}} = [\cos\thetaExit~~\sin\thetaExit]\T$ on the Proximity Circle and the remaining game duration $\trm(\thetaExit)$. 
    From the unconstrained optimal control analysis in \Cref{sec:optimal_control}, the optimal heading angle at $\x_{\rm{exit}}$ is $\psiExit = \atan(\sin\thetaExit,\ \cos\thetaExit - \trm(\thetaExit))$. 
    This heading angle results in the relative heading $\phiExit$ in the Pursuer-fixed frame, which can be computed from \Cref{eq:given_psi}:
    \begin{align*}
        \phiExit = \atan(\mu\sin\psiExit,\ \mu\cos\psiExit - 1). 
    \end{align*}
    Given \eqref{eq:exitCondition}, we conclude that $\phiExit = \thetaExit + \tfrac{\pi}{2}$. 
    Consequently, the straight line trajectory with heading $\psiExit$ does not intersect with the Proximity Circle.
    This implies that the optimal trajectory starting from $\x_{\rm{exit}}$ with remaining duration $\trm(\thetaExit)$ is indeed the straight line trajectory with heading $\psiExit$.
    %
    %
\end{proof}

After substituting \eqref{eq:finalUnconstrained}, equation \eqref{eq:exitCondition} can be simplified to
\begin{align}
    \label{eq:exit_time_angle}
     \left( \mu \trm + \sqrt{1 + \trm^2 - 2\trm \cos \thetaExit} \right) \cos \thetaExit - \mu = 0. 
\end{align}
where we have suppressed the argument of $\theta$ in $\trm$ for brevity.

\begin{proposition} \label{prop:thetaExit}
    For any given $\trm > 0$, \eqref{eq:exit_time_angle} has a unique solution $\thetaExit$ in the range $[\cos^{-1} μ, \frac{\pi}{2}]$.
    Furthermore, as $t_r \to \infty$, we have $\thetaExit \to \tfrac{\pi}{2}$.
\end{proposition}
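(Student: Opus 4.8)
The plan is to read \eqref{eq:exit_time_angle} as $g(\thetaExit)=0$ for the function
\[
 g(\theta) \;=\; \big(\mu\trm + R(\theta)\big)\cos\theta - \mu, \qquad R(\theta) := \sqrt{1+\trm^2-2\trm\cos\theta},
\]
and to get existence from the intermediate value theorem and uniqueness from a monotonicity argument on $g'$. First I would note $R(\theta)^2=(\trm-\cos\theta)^2+\sin^2\theta\geq(\trm-\mu)^2+1-\mu^2>0$ on $[\cos^{-1}\mu,\tfrac{\pi}{2}]$, so $g$ is smooth there. At the endpoints, $g(\tfrac{\pi}{2})=-\mu<0$, while
\[
 g(\cos^{-1}\mu)=\mu\big(\mu\trm+\sqrt{1+\trm^2-2\mu\trm}-1\big),
\]
and squaring shows $\sqrt{1+\trm^2-2\mu\trm}>1-\mu\trm$ (the reduced inequality is $\mu^2<1$, strict since $\trm>0$), so $g(\cos^{-1}\mu)>0$. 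Hence a root exists in $(\cos^{-1}\mu,\tfrac{\pi}{2})$.

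For uniqueness, differentiate and substitute $R^2=1+\trm^2-2\trm\cos\theta$ to obtain the factored form
\[
 g'(\theta)=\frac{\sin\theta}{R(\theta)}\,N(\theta),\qquad N(\theta):=3\trm\cos\theta-1-\trm^2-\mu\trm R(\theta).
\]
The crux is that $N$ is \emph{strictly decreasing} on $(0,\tfrac{\pi}{2})$: since $R'(\theta)=\trm\sin\theta/R(\theta)\geq 0$, we get $N'(\theta)=-\trm\sin\theta\big(3+\mu\trm/R(\theta)\big)<0$. Because $\sin\theta>0$ throughout $[\cos^{-1}\mu,\tfrac{\pi}{2}]$ (as $\mu<1$), $g'$ inherits at most one sign change from $N$, so $g$ is either monotone decreasing or first increasing then decreasing on the interval; a purely increasing $g$ is ruled out by $g(\cos^{-1}\mu)>0>g(\tfrac{\pi}{2})$. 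In the decreasing case $g$ crosses zero exactly once; in the unimodal case $g$ stays positive on the increasing portion (it starts at $g(\cos^{-1}\mu)>0$) and then decreases from a positive value to $-\mu$, again crossing zero exactly once. This gives uniqueness of $\thetaExit$.

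For the asymptotics, evaluate the defining equation at the root: $\cos\thetaExit=\mu/\big(\mu\trm+R(\thetaExit)\big)$. Since $R(\thetaExit)\geq 0$, this yields $0\leq\cos\thetaExit\leq 1/\trm$, so $\cos\thetaExit\to 0^+$ as $\trm\to\infty$, which forces $\thetaExit\to\tfrac{\pi}{2}^-$ given $\thetaExit\in[\cos^{-1}\mu,\tfrac{\pi}{2}]$.

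The main obstacle I anticipate is the uniqueness step: the raw derivative $g'(\theta)$ is an unwieldy expression, and the whole argument hinges on spotting that after eliminating $R^2$ its sign is governed by the single auxiliary function $N(\theta)$, which—unlike $g$ or $g'$—is manifestly monotone, so the ``possible three roots'' scenario never arises. Everything else (the IVT setup and the $\trm\to\infty$ limit) is routine once that factorization is in hand.
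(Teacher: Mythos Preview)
Your argument is correct, and it takes a genuinely different route from the paper's. The paper rearranges \eqref{eq:exit_time_angle} to $\cos\thetaExit\sqrt{1+\trm^2-2\trm\cos\thetaExit}=\mu(1-\trm\cos\thetaExit)$, squares both sides, and studies the resulting cubic $f(s)=2\trm s^3-(1+\trm^2(1-\mu^2))s^2-2\mu^2\trm s+\mu^2$ in $s=\cos\thetaExit$; sign checks $f(-1)<0$, $f(0)>0$, $f(\mu)<0$ place exactly one root in $(0,\mu)$, and Vieta's relations give the $\trm\to\infty$ limit. Your approach stays with the unsquared $g$, gets existence from the IVT, and extracts uniqueness from the factorization $g'=(\sin\theta/R)\,N$ with $N$ strictly decreasing---so $g$ is either monotone or unimodal, either way forcing a single crossing. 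What your route buys is that it never introduces the extraneous roots that squaring can create (the paper does not explicitly verify that its root in $(0,\mu)$ satisfies the sign condition $1-\trm\cos\thetaExit>0$ needed before squaring), and your asymptotic bound $\cos\thetaExit\leq 1/\trm$ is cleaner than the Vieta computation. What the paper's route buys is that once the cubic is written down, the endpoint sign checks are purely algebraic and avoid any differentiation; it also makes the full root structure (three real roots in disjoint intervals) visible at a glance.
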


\begin{proof}
    The proof is presented in~\Cref{AP:thetaExit}.
\end{proof}

\subsection{Full Constrained Solution}\label{sec:fullSolution}

In summary, trajectories which activate the constraint and admit survival on the constraint begin with a unconstrained portion in the direction of $\phi_{\tan}$ which hits the constraint at location $\thetaEntry$ and time $t_{\tan}$. Then, the trajectory rides along the constraint until reaching $\thetaExit$. Finally, it leaves the constraint on the new unconstrained trajectory described by~\eqref{eq:finalUnconstrained}. We now put all three pieces together.

Equation \eqref{eq:exit_time_angle} dictates the relationship between the exit angle and the remaining time. 
Solving \eqref{eq:exit_time_angle} for $\trm$, one obtains 
\begin{align} \label{eq:putting_all}
    \trm = \frac{\sin\thetaExit\sqrt{\mu^2 - \cos^2\thetaExit} - (\mu^2 - \cos^2\thetaExit)}{(1-\mu^2)\cos\thetaExit}. 
\end{align}
Recall from \eqref{eq:t_remain} that $\trm$ denotes the remaining duration of the game. 
From \eqref{eq:t_s} we obtain the time spent to reach the location $\thetaExit$ on the Proximity Circle.
Therefore, using \eqref{eq:t_remain} and \eqref{eq:t_s}, we may re-write \eqref{eq:putting_all} as 
\begin{align} \label{eq:theta_exit}
\begin{split}
    \frac{\sin\thetaExit\sqrt{\mu^2 - \cos^2\thetaExit} - (\mu^2 - \cos^2\thetaExit)}{(1-\mu^2) \cos\thetaExit}  \\
     +~~ \tarc(\thetaEntry, \thetaExit)= T - t_{\tan}.
\end{split}
\end{align}
Given initial location $\x_0$, one obtains unique $t_{\tan}$ and $\thetaEntry$ from \eqref{eq:t_tan} and \eqref{eq:thetaTan}.
One may then solve \eqref{eq:theta_exit} to find the optimal exit location for a given initial location for $E$.
In general, \eqref{eq:theta_exit} needs to be solved numerically, however, we may employ a bisection search method to find $\thetaExit$ since the LHS of \eqref{eq:theta_exit} is a monotonic function of $\thetaExit$.
To show this, let us denote the LHS of \eqref{eq:theta_exit} by $f(\thetaExit)$, and obtain
\begin{align*}
   & \frac{\mathrm{d} f(\thetaExit)}{\mathrm{d}\thetaExit} = \\ &\frac{\mu^2\sin\thetaExit}{\cos^2\thetaExit\sqrt{\mu^2 - \cos^2\thetaExit}(\sin\thetaExit + \sqrt{\mu^2 - \cos^2\thetaExit})},
\end{align*}
which is strictly positive in the domain $\thetaExit \in [\cos^{-1}\mu , \pi/2]$, and hence, $f(\thetaExit)$ is a monotonically increasing function in $[\cos^{-1}\mu , \pi/2]$.

The final location of $E$ in the Pursuer-fixed frame is at 
\begin{align}
    \begin{bmatrix}
        x_f \\ y_f
    \end{bmatrix} =
    \begin{bmatrix}
        \cos\thetaExit - v(\frac{\pi}{2}+\thetaExit)\trm(\thetaExit) \sin\thetaExit \\
        \sin\thetaExit + v(\frac{\pi}{2}+\thetaExit)\trm(\thetaExit) \cos\thetaExit
    \end{bmatrix},
\end{align}
where $v(\frac{\pi}{2}+\thetaExit)$ is obtained from the first case of \eqref{eq:given_phi}, i.e., $v(\frac{\pi}{2}+\thetaExit)= \sin\thetaExit + \sqrt{\mu^2 - \cos^2\thetaExit}$.
Consequently, $E$'s optimal final distance from $P$ is
\begin{align*}
    d_f^*(T, \x_0) \triangleq \|\x_f\|=  \sqrt{1 + \trm(\thetaExit)^2 v(\tfrac{\pi}{2}+\thetaExit)^2}.
\end{align*}
Using \eqref{eq:putting_all} to substitute $\trm(\thetaExit)$ and using $v(\frac{\pi}{2}+\thetaExit)= \sin\thetaExit + \sqrt{\mu^2 - \cos^2\thetaExit}$, one may verify that
\begin{align*}
    \trm(\thetaExit) v(\tfrac{\pi}{2}+\thetaExit) = \frac{\sqrt{\mu^2 -\cos^2\thetaExit}}{\cos\thetaExit}.
\end{align*}
Consequently, $d_f^*$ simplifies to 
\begin{align} \label{eq:simplified_final_distance}
    d_f^*(T, \x_0) = \frac{\mu}{\cos\thetaExit},
\end{align}
where $\thetaExit$ is the unique solution to \eqref{eq:theta_exit}.
In \eqref{eq:simplified_final_distance}, $\thetaExit$ depends on $T$ and $\x_0$. 
The dependence on $\x_0$ is via the terms $t_{\tan}$ and $\thetaEntry$ in \eqref{eq:theta_exit}.

\begin{remark}
    One may trivially verify that $d_f^*(T, \x_0) > 1$ since $\thetaExit \in (\cos^{-1}\mu, \tfrac{\pi}{2})$. 
    Furthermore, as $T \to \infty$, we obtain $\thetaExit \to \tfrac{\pi}{2}$ from \eqref{eq:theta_exit} and consequently, $d_f^*(T,\x_0) \to \infty$. 
\end{remark}

\begin{lemma}
    \label{lem:T_critical}
    Given an initial position, $\x_0$, the critical setting of $T$ (denoted $T_c$) which partitions the parameter space into unconstrained optimal trajectories (for $T \leq T_c$) and constrained optimal trajectories (for $T > T_c$) is given by
    \begin{equation}
        \label{eq:T_critical}
        T_c(\x_0) = t_{\tan} + \cos \theta_{\tan} - \frac{\sin \theta_{\tan}}{\tan \psi_{\tan}},
    \end{equation}
    where $t_{\tan}$, $\theta_{\tan}$, and $\psi_{\tan}$ are defined by~\Cref{eq:t_tan}, \Cref{eq:thetaTan}, and~\Cref{eq:phi_tan} substituted into~\Cref{eq:given_phi}, respectively.
\end{lemma}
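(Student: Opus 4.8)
The plan is to read $T_c$ off from the characterization of the constrained solution already obtained in \Cref{sec:fullSolution}. By \Cref{lem:optimal_control}, for $\x_0\notin\Omega_{\text{capture}}$ the optimal trajectory is either a single unconstrained straight line or follows the ``enter at $\thetaEntry$ -- ride the constraint -- exit at $\thetaExit$'' pattern (these are the only two possibilities: by \Cref{lm:noEscape} any entry below $\cos^{-1}\mu$ forces capture, which is excluded, and by \Cref{th:thetaEntry} a forced entry occurs at $\thetaEntry$). A genuinely constrained solution therefore exists precisely when \eqref{eq:theta_exit} has a root $\thetaExit\in[\thetaEntry,\tfrac{\pi}{2})$; since its left-hand side $f(\thetaExit)$ was shown (end of \Cref{sec:fullSolution}) to be strictly increasing on $[\cos^{-1}\mu,\tfrac{\pi}{2}]$ and its right-hand side is $T-t_{\tan}$, such a root exists iff $T-t_{\tan}\ge f(\thetaEntry)$. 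Hence $T_c=t_{\tan}+f(\thetaEntry)$, with the equality case $T=T_c$ corresponding to $\thetaExit=\thetaEntry$, i.e.\ zero time spent on the arc -- the degenerate, tangent case that is still an unconstrained trajectory.

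It then remains to evaluate $f(\thetaEntry)$ and verify it matches \eqref{eq:T_critical}. Setting $\thetaExit=\thetaEntry=\theta_{\tan}$ in \eqref{eq:theta_exit} makes the elliptic-integral term $\tarc(\thetaEntry,\thetaEntry)$ vanish, so $f(\theta_{\tan})$ equals the first fraction of \eqref{eq:theta_exit} at $\theta_{\tan}$. Writing $w=\sqrt{\mu^2-\cos^2\theta_{\tan}}$ (real, since $\theta_{\tan}\in[\cos^{-1}\mu,\tfrac{\pi}{2})$ by \Cref{rm:thetaEntry}) and using $1-\mu^2=\sin^2\theta_{\tan}-w^2=(\sin\theta_{\tan}-w)(\sin\theta_{\tan}+w)$, the numerator $w(\sin\theta_{\tan}-w)$ cancels one factor of the denominator, leaving
\begin{align*}
    f(\theta_{\tan})=\frac{w}{\cos\theta_{\tan}\,(\sin\theta_{\tan}+w)}.
\end{align*}
On the other hand, subtracting \eqref{eq:thetaTan} from \eqref{eq:phi_tan} gives $\phi_{\tan}-\theta_{\tan}=\pi-\sin^{-1}(1/\|\x_0\|)-\cos^{-1}(1/\|\x_0\|)=\tfrac{\pi}{2}$, so $\sin\phi_{\tan}=\cos\theta_{\tan}$ and $\cos\phi_{\tan}=-\sin\theta_{\tan}$; feeding $\phi_{\tan}$ through the first branch of \eqref{eq:given_phi} yields $\psi_{\tan}=\theta_{\tan}+\tfrac{\pi}{2}-\sin^{-1}(\cos\theta_{\tan}/\mu)$, and expanding $\sin\psi_{\tan}$, $\cos\psi_{\tan}$ and collecting terms shows that $\cos\theta_{\tan}-\sin\theta_{\tan}/\tan\psi_{\tan}$ equals exactly the same expression $w/(\cos\theta_{\tan}(\sin\theta_{\tan}+w))$. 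Therefore $T_c=t_{\tan}+f(\theta_{\tan})$ is precisely \eqref{eq:T_critical}. (The same $T_c$ arises geometrically by requiring the unconstrained optimal line of \Cref{sec:optimal_control} to be tangent to the Proximity Circle, i.e.\ its global heading $\psi^*(0,\x_0)=\atan(y_0,x_0-T_c)$ to equal $\psi_{\tan}$; substituting the tangent-point representation of $\x_0$ and the relation $\tan\psi_{\tan}=v(\phi_{\tan})\sin\phi_{\tan}/(1+v(\phi_{\tan})\cos\phi_{\tan})$ from \eqref{eq:v_mu_phi_psi} collapses $T_c=x_0-y_0/\tan\psi_{\tan}$ to \eqref{eq:T_critical}.)

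The part needing the most care is the first paragraph: establishing cleanly that a single threshold $T_c$ governs the dichotomy -- that for $T\le T_c$ no admissible ``enter--ride--exit'' trajectory (with $\thetaExit\ge\thetaEntry$) exists, so the optimum is the unconstrained straight line of \Cref{lem:optimal_control}, while for $T>T_c$ the constraint must become active. This hinges on the monotonicity of $f$ together with the exhaustiveness of the two trajectory types for $\x_0\notin\Omega_{\text{capture}}$; one should also note $T_c>t_{\tan}$ (immediate from $f(\theta_{\tan})>0$), which guarantees that once $T>T_c$ the entry phase of duration $t_{\tan}$ genuinely fits inside the horizon. The remaining steps -- the Pythagorean cancellation and the half-angle expansion of $\psi_{\tan}$ -- are routine.
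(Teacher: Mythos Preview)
Your approach is correct but takes a different route from the paper. The paper argues \emph{forward} from the unconstrained solution: the critical $T_c$ is the value for which the unconstrained optimal heading $\psi^*$, evaluated at the tangent point via \Cref{eq:ψ_xy_compact}, equals $\psi_{\tan}$; taking $\tan$ of that one equation immediately gives \Cref{eq:T_critical}, and the partition follows from the (informally argued) monotonicity of $\psi^*$ in $T$. You instead argue \emph{backward} from the constrained solution of \Cref{sec:fullSolution}: the enter--ride--exit trajectory exists with $\thetaExit\ge\thetaEntry$ iff $T\ge t_{\tan}+f(\thetaEntry)$, using the monotonicity of $f$ already established there, and then you verify algebraically that $f(\thetaEntry)$ matches the stated formula via the half-angle expansion of $\psi_{\tan}$. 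Your parenthetical remark at the end is in fact exactly the paper's argument. The paper's route is shorter (no expansion of $\psi_{\tan}$ is needed), while yours reuses an already-proved monotonicity and grounds the dichotomy in the exhaustiveness of the two trajectory types, at the price of the extra algebra.
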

\begin{proof}
    The critical trajectory grazes the Proximity Circle, when the optimal unconstrained $ψ^*$ is equal to the heading required to reach the tangent point $θ_{\tan}$, as given by Section~\ref{sec:enteringConstraint}. 
    For $T > T_c$, the optimal unconstrained heading, $ψ^*$, from~\Cref{eq:ψ_xy_compact} continues to increase towards $\pi$ which causes the unconstrained trajectory to cross into the Proximity Circle.
    For decreasing $T < T_c$ the optimal unconstrained heading decreases towards $0$ resulting in unconstrained trajectories which remain feasible throughout the duration.
    Substituting $\x(t_{\tan}) = \begin{bmatrix} \cos θ_{\tan} & \sin θ_{\tan} \end{bmatrix}\T$ into~\Cref{eq:ψ_xy_compact} and setting equal to $ψ_{\tan}$ gives
    \begin{equation*}
        \atan\left(\sin θ_{\tan},\ \cos θ_{\tan} - t_r\right) = ψ_{\tan}.
    \end{equation*}
    Taking the $\tan$ of the above expression allows solving for the requisite time remaining $\trm$ which ensures that the $ψ^*(t_{\tan}) = ψ_{\tan}$.
    The total time, then, is this $\trm$ summed with the time needed to reach $θ_{\tan}$ which is simply $t_{\tan}$ from~\Cref{eq:t_tan}, which results in~\Cref{eq:T_critical}.
\end{proof}

\begin{figure}
    \centering
    \includegraphics[trim = 140 330 140 330, clip, width = \linewidth]{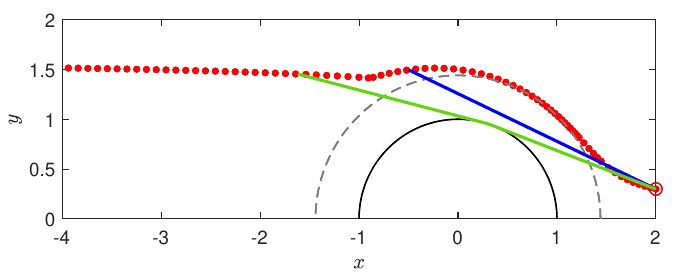}
    \put(-130, 81) {$T = T_c$}
    \put(-117, 74) {$\downarrow$}
    \put(-190, 30) {\boxed{\mu = 0.6}}
    \caption{ For $\x_0=\begin{bmatrix}2 & 0.3\end{bmatrix}\T$, the final locations of $E$ (in the Pursuer-fixed frame) is plotted using red dots as $T$ is varied from $0$ to $4$. 
    For $T=2.1$ and $T=2.6$ the corresponding optimal trajectories of $E$ are plotted in blue and green, respectively.  $T_c\approx 2.19$ is the critical time beyond which $E$'s optimal trajectory becomes constrained.
    The dashed black circle has a radius of $\min_T d_f^*(T,\x_0)$.}
    \label{fig:nonzeroInitialY}
\end{figure}

\begin{figure}
    \centering
    \includegraphics[trim = 140 330 140 330, clip, width = \linewidth]{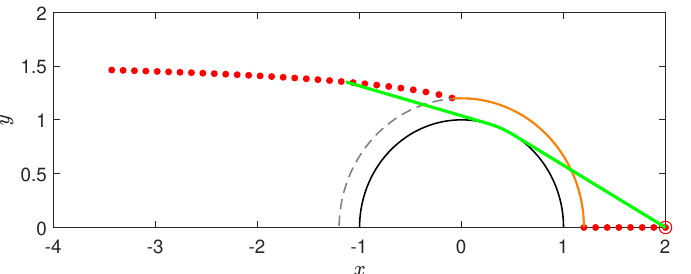}
    \put(-190, 30) {\boxed{\mu = 0.6}}
    \caption{For $\x_0=[2~~0]\T$, the final locations of $E$ (in the Pursuer-fixed frame) is plotted using red dots as $T$ is varied from $0$ to $4$. 
    At time $T=x_0 =2$, any point on the orange arc is an optimal final location for $E$.
    $E$'s optimal trajectory corresponding to $T=2.6$ is plotted in green. 
    }
    \label{fig:Ef_from_y0}
\end{figure}

\Cref{fig:Ef_from_y0,fig:nonzeroInitialY} depict example trajectories and final locations with varying final times $T$.
The final locations $\x_f$ changes smoothly with $T$ except at the single point $T = T_c$ (see \Cref{fig:nonzeroInitialY}). 
This is because all the trajectories are unconstrained for $T< T_c$ (e.g., the blue trajectory) and constrained for $T> T_c$ (e.g., the green trajectory). 

In \Cref{fig:Ef_from_y0} we consider a special initial condition where $y_0 = 0$. 
For all $T < x_0$, the optimal heading is $\psi^* = 0$ and the optimal trajectory is unconstrained, whereas for all $T> x_0$, the optimal trajectory is constrained and has three phases (e.g., the green trajectory). 
At $T= x_0$, any constant $\psi$ that does not satisfy \Cref{lm:constraintActive} is optimal. 
The possible final locations $\x_f$ corresponding to these $\psi$'s are denoted by an orange arc in \Cref{fig:Ef_from_y0}.
It is interesting that, whereas the optimal final location is unique for each $T \ne x_0$, we have a continuum of optimal final locations for $T= x_0$.

Of particular note is that none of the optimal trajectories terminate on the Proximity Circle.
In fact, for any $T$, the final location never enters the black dashed circles.   
This observation motivates the question of, for a given initial condition, what selection of $T$ minimizes the final distance, and what is the corresponding minimum distance?
This is analyzed via a game formulation discussed in the next section.

\section{Nash Equilibrium}

Consider a game in the $(\psi(t), T)$ space where $E$ picks a trajectory described by $\psi(t)$ and $P$ picks the final time $T$.
The previous sections described $E$'s best response to a particular choice of $T$, so it is natural to consider a Stackleberg game where $P$ selects and announces his choice of $T$, knowing that $E$ will play best-response in turn. 
We will describe the Stackleberg equilibrium and then observe that it is, in fact, also a Nash equilibrium.

Given an initial location $\x_0$ for $E$ and an announced final time $T$, let $d_f^*(T, \x_0)$ denote the final distance between $P$ and $E$ provided by $E$'s best response in the previous sections.
Let $T_{\min}(\x_0)$ denote the Stackleberg solution such that
\begin{align} 
    T_{\min}(\x_0) = \argmin_{T>0} d_f^*(T, \x_0).  
\end{align}

\begin{lemma} \label{lem:optimalT}
    For any $\x_0 \notin \Omega_{\text{capture}}$, 
    \begin{align}
    \label{eq:Tmin}
    T_{\min}(\x_0) = \begin{cases}
        x_0 - \tfrac{\mu}{\sqrt{1-\mu^2}}y_0, &\text{ if } x_0 - \tfrac{\mu}{\sqrt{1-\mu^2}}y_0 > 0,\\
        0, &\text{  otherwise,}
    \end{cases}
\end{align}
and consequently, 
\begin{align}
    d_f^*(T_{\min}, \x_0) = \begin{cases}
        \mu x_0 + \sqrt{1-\mu^2}y_0, &\text{ if } T_{\min}(\x_0) > 0,\\
        \|\x_0\|, &\text{  otherwise.}
    \end{cases}
\end{align}
\end{lemma}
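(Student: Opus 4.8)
The plan is to pinch $d_f^*(\cdot,\x_0)$ between two curves that touch at the claimed minimizer. From the geometric description of the unconstrained solution in \Cref{sec:optimal_control} (see \Cref{fig:unconstrained-heading}), the optimal final distance of the \emph{unconstrained} problem over horizon $T$ is $g(T):=\sqrt{(x_0-T)^2+y_0^2}+\mu T$, attained by the constant heading $\psi=\atan(y_0,x_0-T)$. Because the proximity constraint can only hurt $E$, this yields $d_f^*(T,\x_0)\le g(T)$ for all $T>0$, with equality exactly on the unconstrained regime $0\le T\le T_c(\x_0)$ of \Cref{lem:T_critical}. The function $g$ is convex, and solving $g'(T)=0$ gives the unique stationary point $\hat T:=x_0-\tfrac{\mu}{\sqrt{1-\mu^2}}y_0$, with $g(\hat T)=\mu x_0+\sqrt{1-\mu^2}\,y_0$ and $g(0)=\|\x_0\|$; hence over $T\ge0$ the function $g$ is minimized at $\hat T^{+}:=\max(0,\hat T)$, and $g(\hat T^{+})$ equals the asserted value in both cases of \eqref{eq:Tmin}. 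It remains to show the minimizer of $d_f^*(\cdot,\x_0)$ coincides with that of $g$.

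To do so I would prove two facts. First, $\hat T^{+}\le T_c(\x_0)$, so that $d_f^*(\hat T^{+},\x_0)=g(\hat T^{+})$ is genuinely attained. This is trivial when $\hat T\le0$ (as $T_c>0$); when $\hat T>0$, I would first note that $\x_0\notin\Omega_{\text{capture}}$, $\|\x_0\|>1$, and $\hat T>0$ force $\mu x_0+\sqrt{1-\mu^2}\,y_0\ge1$: negating \eqref{eq:no_escape_zone} leaves either this very inequality or $x_0\le\mu$, and in the latter case $\hat T>0$ implies $y_0<\sqrt{1-\mu^2}$, hence $\|\x_0\|^2<\mu^2+(1-\mu^2)=1$, a contradiction. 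With $\mu x_0+\sqrt{1-\mu^2}\,y_0\ge1$, \Cref{rem:nashUnconstrainted} shows the constant-heading run $\psi\equiv\cos^{-1}\mu$ satisfies $\|\x(t)\|\ge\mu x_0+\sqrt{1-\mu^2}\,y_0\ge1$ for all $t$; since $\psi=\cos^{-1}\mu$ is exactly the unconstrained optimal heading for horizon $\hat T$ (because $\atan(y_0,x_0-\hat T)=\atan(\sqrt{1-\mu^2},\mu)=\cos^{-1}\mu$), the unconstrained optimum at $\hat T$ is feasible and therefore $\hat T\le T_c$. Second, $d_f^*(\cdot,\x_0)$ is continuous at $T_c$ (the critical grazing trajectory lies in the closure of both regimes) and strictly increasing on $(T_c,\infty)$: there $d_f^*=\mu/\cos\thetaExit$, and \eqref{eq:theta_exit} exhibits $\thetaExit$ as the inverse of a strictly increasing function of $\thetaExit$ set equal to $T-t_{\tan}$, so $\thetaExit$, hence $\mu/\cos\thetaExit$, strictly increases with $T$.

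Assembling these: on $[0,T_c]$ we have $d_f^*(T,\x_0)=g(T)\ge g(\hat T^{+})$ with equality only at $T=\hat T^{+}$ (by convexity of $g$ and uniqueness of its minimizer over $[0,\infty)$ -- strict when $y_0\neq0$, and a simple ``V-shape'' argument when $y_0=0$); on $(T_c,\infty)$ we have $d_f^*(T,\x_0)>d_f^*(T_c,\x_0)=g(T_c)\ge g(\hat T^{+})$, using $\hat T^{+}\le T_c$ and that $g$ is nondecreasing past its minimizer. Thus $d_f^*(\cdot,\x_0)$ attains its minimum over $T\ge0$ uniquely at $\hat T^{+}$ with value $g(\hat T^{+})$. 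When $\hat T\le0$ this minimum $\|\x_0\|$ is approached as $T\to0^{+}$ but not attained, matching the lemma's convention $T_{\min}=0$; when $\hat T>0$ it is attained at $T_{\min}=\hat T$. This is precisely \eqref{eq:Tmin} and the displayed optimal distance.

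The hard part is the feasibility step $\hat T^{+}\le T_c$ -- equivalently, excluding the possibility that the global optimum of $d_f^*$ lies in the constrained regime. It rests on the small but slightly non-obvious fact that, within $\{\|\x_0\|>1\}$, the no-escape zone $\Omega_{\text{capture}}$ is exactly the set satisfying \eqref{eq:no_escape_zone} (a quadratic-discriminant check shows that whenever that ``bad-cone'' condition holds, some finite horizon makes capture unavoidable). Everything else -- convexity and the stationary point of $g$, the sign of the $\thetaExit$-derivative of the left-hand side of \eqref{eq:theta_exit}, and the two value evaluations -- is routine.
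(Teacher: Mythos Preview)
Your proof is correct and shares the paper's two core ingredients: the upper bound $g(T)=\bar d_f(T,\x_0)=\sqrt{(x_0-T)^2+y_0^2}+\mu T$ and the observation that the constant heading $\psi=\cos^{-1}\mu$ is feasible whenever $\x_0\notin\Omega_{\text{capture}}$. The difference is in how the lower half of the pinching is executed. The paper uses $\psi=\cos^{-1}\mu$ as a \emph{global} lower bound: this single feasible trajectory gives $d_f^*(T,\x_0)\ge d_f^{\rm subopt}(T,\x_0)$ for every $T$, and one checks directly that $d_f^{\rm subopt}(T_{\min},\x_0)=\bar d_f(T_{\min},\x_0)$, closing the sandwich in one step. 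You instead use $\psi=\cos^{-1}\mu$ only to certify $\hat T^{+}\le T_c$, and then separately invoke the constrained-regime formula $d_f^*=\mu/\cos\thetaExit$ together with the monotonicity of \eqref{eq:theta_exit} to rule out minima on $(T_c,\infty)$. Both routes are valid; the paper's is shorter and avoids any appeal to the constrained-solution machinery (no $T_c$, no $\thetaExit$, no continuity argument at the regime boundary), while yours gives a bit more structural information---namely that $d_f^*$ is strictly increasing past $T_c$---at the cost of extra moving parts.
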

\begin{proof}
    The proof is presented in \Cref{AP:optimalT}.
\end{proof}

For $T= T_{\min}$, $E$'s optimal heading from \eqref{eq:ψ_xy_compact} yields
\begin{align}\label{eq:nashEvader}
    \psi(t,\x) = \cos^{-1}\mu,
\end{align}
which is \textit{independent} of the final time or the initial/current location.
This heading angle results in $\phi = \pi - \sin^{-1}\mu$ in the Pursuer-fixed frame, which happens to be the maximum possible value of $\phi$. Notice that while $E$ can choose any admissible trajectory and we have made no assumption forcing its selection to be only a straight-line trajectory, this heading will not intersect the constraint (from Remark~\ref{rem:nashUnconstrainted}), and so the entire trajectory will be one single straight line.


Finally, the following theorem provides Nash equilibrium strategies for this game. \begin{theorem} \label{thm:minmaxStrategy}
    For a given $\x$, an equilibrium pair of policies is 
    \begin{align}
        (\psi_{\rm{NE}}, T_{\rm{NE}}) = \left(\cos^{-1}\mu,~ \max\left\{ x - \tfrac{\mu}{\sqrt{1-\mu^2}}y, 0 \right\}\right).
    \end{align}
\end{theorem}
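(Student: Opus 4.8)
The plan is to verify that the pair $(\psi_{\rm NE}, T_{\rm NE}) = \bigl(\cos^{-1}\mu,\ \max\{x - \tfrac{\mu}{\sqrt{1-\mu^2}}y,\ 0\}\bigr)$ satisfies the two unilateral-deviation inequalities that define a Nash equilibrium: no choice of $T$ by $P$ decreases the final distance below the equilibrium value when $E$ plays $\psi_{\rm NE}$, and no choice of trajectory by $E$ increases the final distance above the equilibrium value when $P$ plays $T_{\rm NE}$. Since the two players' roles are asymmetric (one picks a scalar, the other a trajectory), I would treat the two inequalities separately, both of which have essentially already been established earlier in the excerpt.

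First, I would handle $E$'s side: fix $T = T_{\rm NE}$ and show $\psi_{\rm NE}$ is a best response. This is exactly Lemma~\ref{lem:optimalT} together with the computation in~\eqref{eq:nashEvader}: when $T = T_{\min}(\x_0)$, the optimal-control formula~\eqref{eq:ψ_xy_compact} collapses to $\psi^* = \cos^{-1}\mu$ regardless of current state, so $\psi_{\rm NE}$ is precisely the best-response heading already derived as optimal for that $T$, and the attained value is $d_f^*(T_{\min},\x_0) = \mu x_0 + \sqrt{1-\mu^2}\,y_0$ (or $\|\x_0\|$ in the degenerate case). I should note, as the remark following~\eqref{eq:nashEvader} does, that although $E$ is unrestricted to straight lines, this heading never activates the constraint (Remark~\ref{rem:nashUnconstrainted}), so the unconstrained analysis of Section~\ref{sec:optimal_control} genuinely gives $E$'s global best response here.

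Second, I would handle $P$'s side: fix $E$'s trajectory to be the straight line with heading $\psi_{\rm NE} = \cos^{-1}\mu$ and show that $T_{\rm NE}$ minimizes the resulting final distance over all $T > 0$. Along this fixed trajectory, $\x(t) = \x_0 + t\,[\,\mu^2 - 1,\ \mu\sqrt{1-\mu^2}\,]\T$ (from Remark~\ref{rem:nashUnconstrainted}), so $\|\x(t)\|^2$ is an explicit quadratic in $t$; minimizing it over $t \ge 0$ gives the unconstrained minimizer $t^\star = x_0 - \tfrac{\mu}{\sqrt{1-\mu^2}}y_0$ when that is positive, with minimum value $(\mu x_0 + \sqrt{1-\mu^2}y_0)^2$, and $t = 0$ otherwise — which is exactly $T_{\rm NE}$ and exactly the value $d_f^*(T_{\min},\x_0)^2$. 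Hence $P$ cannot do better by deviating from $T_{\rm NE}$ either. Since each player's deviation is (weakly) non-improving and the equilibrium values from the two computations coincide, the pair is a Nash equilibrium; I would also remark that it coincides with the Stackelberg solution $T_{\min}$, so no separate argument is needed.

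The only subtle point — and the main thing to get right rather than a genuine obstacle — is the logical structure of the second inequality: $d_f^*(T,\x_0)$ as used in Lemma~\ref{lem:optimalT} already presupposes $E$ plays best-response to $T$, whereas the Nash condition on $P$'s side must fix $E$'s strategy at $\psi_{\rm NE}$ and only then vary $T$. I would resolve this by observing that these two quantities agree: for any $T$, moving along $\psi_{\rm NE}$ yields final distance $\|\x(T)\|$, and one checks $\|\x(T)\| \ge d_f^*(T,\x_0) \ge d_f^*(T_{\min},\x_0)$ — the first inequality because best-response is at least as good as the fixed heading, the second by definition of $T_{\min}$ — with equality throughout at $T = T_{\rm NE}$, closing the loop. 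This is the argument that makes the Stackelberg equilibrium also Nash, and it is essentially a one-line chain of inequalities once the explicit quadratic above is in hand.
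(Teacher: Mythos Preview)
Your core argument is correct and is essentially identical to the paper's proof: establish that $\psi_{\rm NE}$ is $E$'s best response to $T_{\rm NE}$ via~\eqref{eq:nashEvader}, and then, with $E$ fixed at heading $\cos^{-1}\mu$, compute $\|\x(t)\|^2$ explicitly as a quadratic in $t$ and observe its minimizer on $[0,\infty)$ is $T_{\rm NE}$. That is exactly what the paper does, and nothing more is required.

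Your final paragraph, however, is both unnecessary and flawed. The inequality chain you write, $\|\x(T)\| \ge d_f^*(T,\x_0) \ge d_f^*(T_{\min},\x_0)$, has the first inequality in the wrong direction: since $d_f^*(T,\x_0)$ is $E$'s \emph{maximum} attainable distance for that $T$, the fixed heading gives $\|\x(T)\| \le d_f^*(T,\x_0)$, not $\ge$. And once the direction is corrected, the chain no longer yields $\|\x(T)\| \ge \|\x(T_{\rm NE})\|$, so it does not establish $P$'s best-response condition. The good news is that you do not need this argument at all: your direct minimization of the quadratic already proves $P$'s side of the Nash condition outright, without any reference to $d_f^*$ or to the Stackelberg problem. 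Simply drop the last paragraph.
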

\begin{proof}
    $\psi_{\rm{NE}}$ is the best response to $T_{\rm{NE}}$ as per~\eqref{eq:nashEvader}.
    
    Now, given $\psi_{\rm{NE}}$, $E$'s location at time $t$ is 
    \begin{align*}
        \x(t) = \begin{bmatrix}
            x + \left(\mu^2 - 1\right)t \\
            y + \mu\sqrt{1-\mu^2} t
        \end{bmatrix}.
    \end{align*}
    Consequently, the distance between $E$ and $P$ at time $t$ is
    \begin{align*}
        d(t) = \sqrt{x^2 + y^2  + (1-\mu^2) \Big(t^2 - 2t\big(x - \tfrac{\mu}{\sqrt{1-\mu^2}}y\big)\Big)}.
    \end{align*}
    Consequently, the minimum value of $d(t)$ is attained at $t=T_{\rm{NE}}$.
\end{proof}

    For the special case of $y=0$ discussed in Section~\ref{sec:fullSolution}, $T_{\rm{NE}} = x$. 
    As observed before for this special case, any constant heading angle is an optimal best-response to $T_{\rm{NE}}$ as long as that heading does not intersect the Proximity Circle.
    However, only the $\psi$ given by~\eqref{eq:nashEvader} is non-exploitable.
    Moreover, there is a significant jump in optimal best-response for small perturbations away from $T_{\rm{NE}}$.
    For $y=0$ and $T < x$, the optimal heading, as per \eqref{eq:ψ_xy_compact}, is to pick $\psi = 0$, which is a pure evasion strategy. 
    On the other hand, when $y=0$ and $T>x$, the optimal strategy is to reach the Proximity Circle, ride the circle, and leave the circle tangentially at $\thetaExit$.
    
\begin{remark}
    Our analysis requires $P$ and $E$ to pre-select policies at the beginning of the game. 
    In general, such policies could be closed-loop feedback policies~\cite{dorothy2021one}. 
    Here, $P$ would obtain no additional information at any time in the game and would have no basis on which to have a feedback policy rather than an open-loop one.
    However, $E$'s situation is different. 
    The Nash equilibrium policy requires $E$ to commit to its open loop policy until time $T_{\rm{NE}}$, expecting the game to end there in equilibrium. 
    Hence, $E$ cannot gain any information that would allow him to exploit any suboptimal behavior by $P$ during the time $t\leq T_{\rm{NE}}$.
    However, if $P$ plays suboptimally with $T>T_{\rm{NE}}$, there remains an open question as to how $E$ should respond after $T_{\rm{NE}}$ in light of the new information.
    Moreover, in the special case of $y=0$, due to the jump between the optimal headings for $T<T_{\rm{NE}}$ and $T>T_{\rm{NE}}$, $E$ may find itself having significant regret for having committed to the Nash solution up to time $T_{\rm{NE}}$.
    A similar situation has also been observed in Section~IV-C of \cite{maity2023efficient}. 
\end{remark}





    \section{Conclusion }
In this work, we studied the problem of optimal evasion from a sensing limited Pursuer that moves on a straight line.
We identified the initial locations for the Evader (i.e., the set $\Omega_{ \text{capture} }$) for which capture is guaranteed, and for such initial conditions, we derived the evasion strategy that maximizes the evader's survival time.
When $\x_0 \notin \Omega_{ \text{capture} }$, the optimal evasion strategy is either to move along a constant heading (i.e., when the conditions in Lemma~\ref{lm:constraintActive} are not satisfied) or it consists of three segments: a straight line trajectory to reach the Proximity Circle, an arc to glide on the Proximity Circle, and a straight line segment leaving  the Proximity Circle. 

We discussed the case where the pursuer can pick the final time $T$, and in that case, the minmax equilibrium strategy resulted into a constant heading trajectory for the evader, where the heading is independent of the evader's initial location.
These results form a basis for addressing the full intermittent sensing pursuit-evasion scenario in future work.

\bibliographystyle{ieeetr}
\bibliography{arXiv}

\appendix 

\subsection{Proof of \Cref{lem:guaranteedCapture}} \label{AP:guaranteedCapture}
For a given time $t$ and an initial location $\x_0$, the set 
\begin{equation*}
    {\mathcal{R}}(t, \x_0) = \{\x_f ~|~ (x_f - (x_0 -t))^2 + (y_f-y_0)^2 \le \mu^2 t^2  \}
\end{equation*}
contains all the possible locations of $E$ at time $t$ in the Pursuer-fixed frame. 
The point
\begin{align}
    \x^* = \Big(1 + \tfrac{\mu t}{\sqrt{(x_0 - t)^2 + y_0^2}}\Big) \begin{bmatrix}
        x_0 - t \\ y_0
    \end{bmatrix}
\end{align}
belongs to $\mathcal{R}(t, \x_0)$ and is the farthest point from the origin of the Pursuer-fixed frame.
Let us define
\begin{align*}
    \bar{d}_f(t, \x_0) \triangleq \|\x^*\| =  \sqrt{(x_0 - t)^2 + y_0^2} + \mu t.
\end{align*} 
For a given $\x_0$, if there exists a $t>0$ such that $\bar{d}_f(t, \x_0)< 1$, then $E$ will enter the interior of the Proximity Circle, regardless of its strategy, by time $t$. 
The condition $\bar{d}_f(t, \x_0)< 1$ yields the inequality:
\begin{align} \label{eq:inequalityforT}
    (1-\mu^2)t^2 - 2(x_0 - \mu) t + \|\x_0\|^2 -1  < 0.
\end{align}
Given \eqref{eq:no_escape_zone}, $t = T_{\rm{survive}}$ is well defined and is the smaller root of the LHS of \eqref{eq:inequalityforT}. 
Therefore, for any final time $T> T_{\rm{survive}}$, there exists a $t \in (T_{\rm{survive}}, T]$ such that $\|\x(t)\| < 1$.
Thus, the given conditions in the lemma is sufficient.

    


To show that the conditions are also necessary, let $E$ follow the direction $\psi = \cos^{-1}\mu$. 
We now show that, under this heading, capture is possible only if all the conditions in the lemma are satisfied.
To this end, we obtain from \eqref{eq:f_nondim} that $\x(t) = \x_0 + t \begin{bmatrix} (\mu^2 -1) & \mu\sqrt{1-\mu^2} \end{bmatrix}\T$, and consequently, 
\begin{align*}
    \|\x(t)\|^2 &= \|\x_0\|^2 + (1-\mu^2)t^2 - 2(1-\mu^2)t \underset{c}{\underbrace{\begin{bmatrix}
        1 & \tfrac{\mu}{\sqrt{1-\mu^2}} \end{bmatrix}\x_0 }} \\
    & = \|\x_0\|^2 - (1-\mu^2) c^2 + (1-\mu^2) (t-c)^2, \\
    & = (\mu x_0 + \sqrt{1-\mu^2}y_0)^2 + (1-\mu^2) (t-c)^2.
\end{align*}
Therefore, to ensure $\|\x(t)\|<1$ for some $t > 0$, it is necessary that (i) $\mu x_0 + \sqrt{1-\mu^2}y_0 < 1$, (ii) $c>0$. 
Combining (i) and (ii), we obtain $x_0 > \mu$. 
Thus, \eqref{eq:no_escape_zone} is necessary.
To show that \eqref{eq:max_survival_time} is also necessary, we proceed as follows. 
From the expression of $T_{\rm{survive}}$, one may verify that 
\begin{align*}
    \left( \frac{x_0 - T_{\rm{survive}}}{1-\mu T_{\rm{survive}}}\right)^2 +
    \left( \frac{y_0 }{1-\mu T_{\rm{survive}}}\right)^2 = 1.
\end{align*}
For an initial location $\x_0$, consider the heading angle $\psi$ such that $\cos\psi = \tfrac{x-T_{\rm{survive}}}{1 - \mu T_{\rm{survive}}}$ and $\sin\psi = \tfrac{y_0}{T_{\rm{survive}}}$. 
Using \eqref{eq:f_nondim}, we obtain that, for any $t$, 
\begin{align} \label{eq:x(t)}
    \x(t) = \begin{bmatrix}
        x_0 + t \tfrac{\mu x_0 - 1}{1- T_{\rm{survive}}} \\
        y_0 + t \tfrac{y_0}{T_{\rm{survive}}}
    \end{bmatrix}.
\end{align}
Solving for $t$ such that $\|\x(t)\| = 1$, we obtain two solutions (say $t_1$ and $t_2$). 
For $\x_0$'s satisfying \eqref{eq:no_escape_zone}, both $t_1$ and $t_2$ are real and non-negative. 
Without loss of generality, we assume $t_1 < t_2$.
Furthermore, one may verify that $\|\x(T_{\rm{survive}})\| = 1$ and $t_1 = T_{\rm{survive}}$. 
Therefore, to ensure $\|\x(t)\|< 1$ for some $t$ we must have $T > T_{\rm{survive}}$. 
Thus, \eqref{eq:max_survival_time} is necessary. \hfill $\blacksquare$



\subsection{Proof of \Cref{prop:thetaExit}} \label{AP:thetaExit}
Equation \Cref{eq:exit_time_angle} can be rearranged as 
\begin{align} \label{eq:UniquethetaExit}
    \cos \thetaExit \sqrt{1 + \trm^2 - 2\trm \cos \thetaExit}    = \mu(1- \trm \cos \thetaExit).
\end{align}
Taking squares on both sides of the last equation and rearranging the terms yields, 
\begin{align*}
    2\trm \cos^3\thetaExit - (1 + &\trm^2(1-\mu^2))\cos^2\thetaExit \\
     &- 2\mu^2\trm\cos\thetaExit + \mu^2 = 0.
\end{align*}
The last expression is a cubic polynomial in $\cos\thetaExit$. 
Let us denote this cubic polynomial by $f(s) = 2\trm s^3 - (1 + \trm^2 (1-\mu^2))s^2 - 2\mu^2\trm s + \mu^2$. 
Notice that $f(-1)  < 0$, $f(\mu) < 0$ and $f(0) > 0$ for all $\trm >0$ and $\mu \in (0, 1)$. 
Therefore, the three roots of $f(s)$ are within the ranges $(-1, 0)$, $(0, \mu)$ and $(\mu, \infty)$, respectively.   
Thus, \eqref{eq:UniquethetaExit} has a unique solution for $\thetaExit$ in the range $[\cos^{-1}\mu, \tfrac{\pi}{2}]$. 

Let $r_1, r_2,$ and $r_3$ be the roots of the polynomial $f(s)$. 
Then,
\begin{align*}
   & r_1 + r_2 + r_3  = \frac{1 + \trm^2(1-\mu^2)}{2\trm}, \quad r_1r_2 r_3 = - \frac{\mu^2}{2\trm}, \\
   &r_1r_2 + r_2r_3 + r_3 r_1 = - \mu^2.
\end{align*}
Therefore, as $\trm\to \infty$, two of the roots approach $0$ and the third one to $\infty$. 
Consequently, as $\trm\to \infty$, we have $\thetaExit \to \tfrac{\pi}{2}$. \hfill $\blacksquare$

\subsection{Proof of \Cref{lem:optimalT}} \label{AP:optimalT}

For a given a final time $T$ and an initial location $\x_0 \notin \Omega_{ \text{capture} }$, the possible final locations of $E$ in the Pursuer-fixed frame is contained in the set
\begin{equation*}
    {\mathcal{R}}(T, \x_0) = \{\x_f ~|~ (x_f - (x_0 -T))^2 + (y_f-y_0)^2 \le \mu^2 T^2  \}. 
\end{equation*}
The farthest point on this set from the origin of the Pursuer-fixed frame is at a distance of 
\begin{align*}
    \bar{d}_f(T, \x_0) \triangleq \sqrt{(x_0 - T)^2 + y_0^2} + \mu T.
\end{align*} 
Since, for any given pair $(T, \x_0)$, $\mathcal{R}(T, \x_0)$ is an over-approximation of the $E$'s reachable region, we may write  $d_f^*(T, \x_0) \le \bar{d}_f(T, \x_0)$, where $d_f^*(T, \x_0)$ is the maximum distance obtained by $E$'s optimal strategy. 

Minimizing $\bar{d}_f (T, \x_0)$ w.r.t. $T$ for a given $\x_0$ yields the $T_{\min}$ given in \eqref{eq:Tmin}.
Next, we verify that $T_{\min}$ is also the minimizer of $d_f^*$. 
To that end, we obtain from \eqref{eq:ψ_xy_compact} that $\psi^* = \cos^{-1}\mu$ when $T= T_{\min}.$ 
Using \Cref{lm:constraintActive}, we further notice that, for all $\x_0 \notin \Omega_{ \text{capture} }$,  $E$'s straight line trajectory with $\psi = \cos^{-1}\mu$ does not intersect with the Proximity Circle. 
Therefore, a straight line trajectory with $\psi = \cos^{-1} \mu$ is feasible for \textit{any} $T$, and the final distance resulting from this heading is
\begin{align}
\begin{split}
    \sqrt{\|\x_0\|^2 + (1-\mu^2)(T^2 - 2(x_0- \frac{\mu}{\sqrt{1-\mu^2}}y_0)T)} \\
    \qquad\triangleq d_f^{\rm{subopt}}(T, \x_0).
    \end{split}
\end{align}
Since the straight line trajectory with heading $\cos^{-1}\mu$ is not necessarily the optimal strategy, we may write 
\begin{align*}
    d_f^*(T, \x_0) &\ge d_f^{\rm{subopt}}(T, \x_0) \ge d_f^{\rm{subopt}}(T_{\min}, \x_0),
\end{align*}
where the last equality holds only at $T= T_{\min}$. Therefore, $d_f^*(T, \x_0)$ is lower bounded by the constant $d_f^{\rm{subopt}}(T_{\min}, \x_0)$ and upper bounded by $\bar{d}_f(T, \x_0)$ for all $T$. 
Since $\bar{d}_f(T_{\min}, \x_0) = d_f^{\rm{subopt}}(T_{\min}, \x_0)$, we conclude that $T_{\min}$ is the unique minimize of $d_f^*(T, \x_0)$. \hfill $\blacksquare$

\end{document}